\apptocmd{\sloppy}{\hbadness 10000\relax}{}{}
\newtheorem{theorem}{Theorem}
\newtheorem{proposition}{Proposition}
\theoremstyle{definition}
\newtheorem{definition}{Definition}
\newtheorem{remark}{Remark}
\newcommand{\MoV}{\textsf{MoV}}
\newcommand{\Expec}{\mathbb{E}}
\title{Election Manipulation on Social Networks with Messages on Multiple Candidates}
\author{
	Matteo Castiglioni\textsuperscript{\textnormal 1}, Diodato Ferraioli\textsuperscript{\textnormal 2}, Giulia Landriani\textsuperscript{\textnormal 1}, Nicola Gatti\textsuperscript{\textnormal 1} \\
	\textsuperscript{1} \textnormal{Politecnico di Milano, Piazza Leonardo da Vinci 32, Milano, Italy} \\
	\textsuperscript{2} \textnormal{Universit\`a degli Studi di Salerno, Via Giovanni Paolo II, Fisciano, Italy }\\
	\textnormal{matteo.castiglioni@polimi.it, dferraioli@unisa.it, giulia.landriani@mail.polimi.it, nicola.gatti@polimi.it}
}
\begin{document} 

\maketitle

\begin{abstract}
We study the problem of election control through social influence when the manipulator is allowed to use the locations that she acquired on the network for sending \emph{both} positive and negative messages on \emph{multiple} candidates, widely extending the previous results available in the literature that study the influence of a single message on a single candidate. In particular, we provide a tight characterization of the settings in which the maximum increase in the margin of victory can be efficiently approximated and of those in which any approximation turns out to be impossible. We also show that, in simple networks, a large class of algorithms, mainly including all approaches recently adopted for social-influence problems, fail to compute a bounded approximation even on very simple networks, as undirected graphs with every node having a degree at most two or directed trees. Finally, we investigate various extensions and generalizations of the model.
\end{abstract}

\section{Introduction}
Nowadays, there is increasing use of social networks to convey inaccurate and unverified information, e.g., hosting the diffusion of fake news, spread by malicious users for their illicit goals. This can lead to severe and undesired consequences, as widespread panic, libelous campaigns, and conspiracies. In the United States, for instance, there is an ongoing discussion on the power of manipulation of social media during the US elections in 2016~\cite{USelection} and, more importantly, for future elections. Thus, understanding and limiting  the adverse effects on the elections due to information diffusion in social networks is currently considered of paramount importance.

The problem of election control through social influence has been recently the object of interest of many works. E.g., \citeauthor{sina2015adapting}~[\citeyear{sina2015adapting}] show how to modify the relationship among voters in order to make the desired candidate to win an election; \citeauthor{auletta2015minority}~[\citeyear{auletta2015minority,auletta2017information,auletta2017robustness}] show that, in case of two only candidates, a manipulator controlling the order in which information is disclosed to voters can lead the minority to become a majority;  a similar adversary is studied by \citeauthor{auletta2018reasoning}~[\citeyear{auletta2018reasoning}], showing that such a manipulator can lead a bare majority to consensus; \citeauthor{bredereck2017manipulating}~[\citeyear{bredereck2017manipulating}] study how selecting seeds from which to diffuse information to manipulate a two-candidate election.

In our paper, we focus on this last kind of manipulation.
More precisely, we assume that a manipulator can buy some locations on the network from which she kicks off a diffusion of (potentially fake) news aiming at altering the preference rankings of the voters receiving them to make the desired candidate to win the election.
\citeauthor{wilder2018controlling}~[\citeyear{wilder2018controlling}] have recently studied the case in which a manipulator spreads information on a single candidate to make her win the election or lose the election. The authors also provided approximation algorithms to compute the optimal seeds for positive messages only or negative messages only.

However, in elections with more than two candidates, the assumption that the election control can be either constructive or destructive as well as on a single candidate only is too limiting. Consider, for example, the setting in Figure~\ref{fig:clique}, characterized by five voters and five candidates, with $c_0$ denoting the manipulator's candidate and $c_1, \ldots, c_4$ denoting the remaining candidates. Among the five voters, we assume that one of them already prefers the manipulator's candidate $c_0$ to $c_4$ and these two to the remaining candidates;
the preference of the remaining voters are arranged as follows:
one voter prefers $c_1$ to $c_0$ and these two to the remaining candidates;
one voter prefers $c_1$ to $c_3$ and these two to the remaining candidates;
one voter prefers $c_2$ to $c_3$ and these two to the remaining candidates;
one voter prefers $c_2$ to $c_4$ and these two to the remaining candidates.
We assume that voters are arranged on a clique and each information is received with probability one regardless of the sender. Hence, we do not need to care about which nodes send positive or negative messages since all voters will always receive these messages. It can be easily observed that when a message (positive or negative) on a single candidate is sent, then the desired candidate $c_0$ cannot be made to win the election (at most, the election will end with a tie between $c_0$ and another candidate, both taking two votes). Instead,  injecting the network with a positive message toward $c_0$ and a negative message toward $c_2$ will result in $c_0$ being the only node with two votes, and thus the winner. This example also shows that, differently from what happens in the setting studied by~\citeauthor{wilder2018controlling}~[\citeyear{wilder2018controlling}], the optimal solution can include positive/negative messages on candidates different from the desired one.

\begin{figure}
	\includegraphics[width=\linewidth]{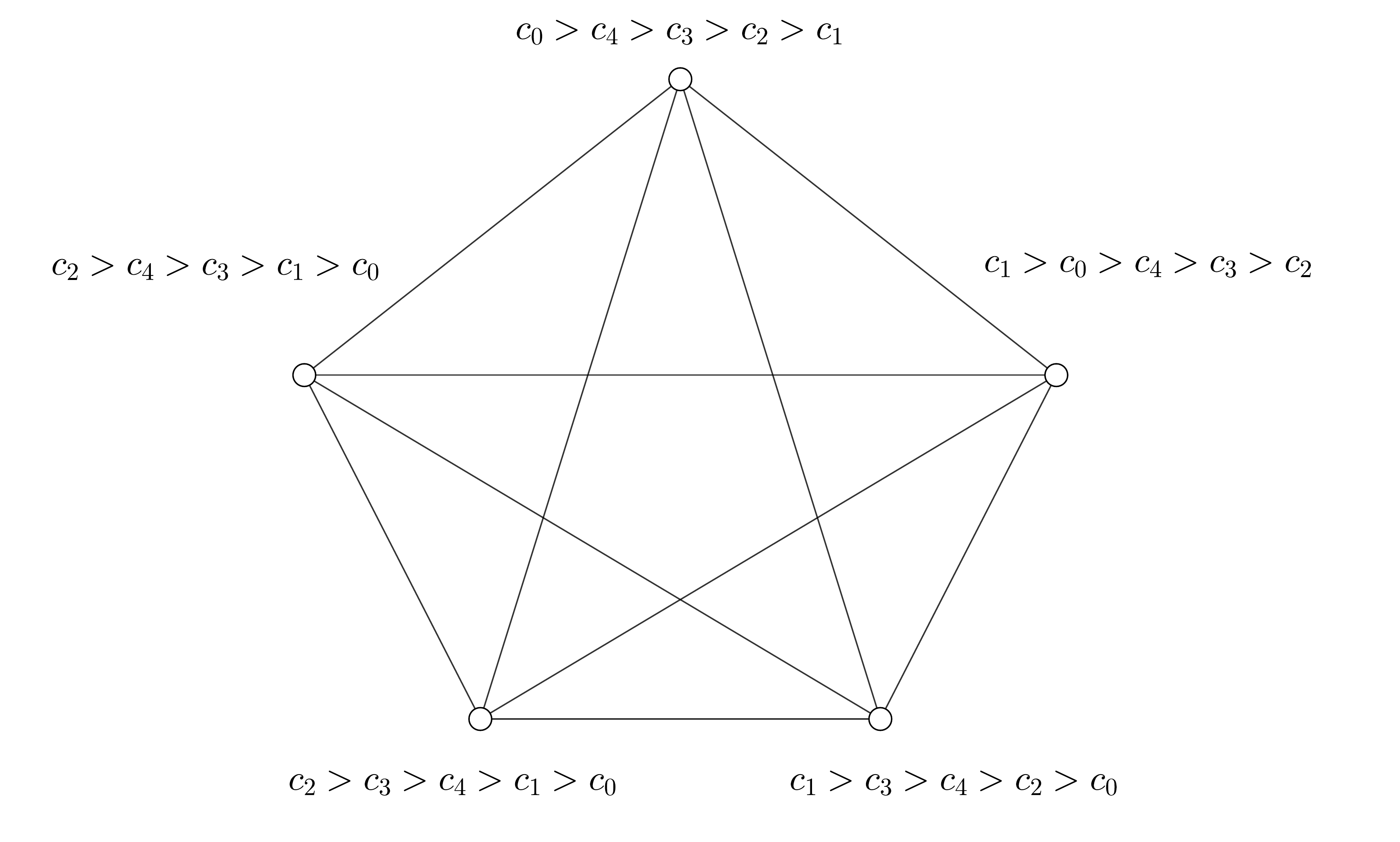}
	\caption{Clique characterized by five voters and five candidates.}
	\label{fig:clique}
\end{figure}

\paragraph{Our Contribution.}
We focus on the election control problem when the manipulator is allowed to use the locations that she acquired on the network for sending \emph{both} positive and negative messages on \emph{multiple} candidates. Our model extends the one by \citeauthor{wilder2018controlling}~[\citeyear{wilder2018controlling}] and assumes that diffusion occurs according to a variation of the \emph{independent cascade model} \cite{kempe2003maximizing} able to capture the simultaneous spread of multiple messages. We also focus on the same objective studied by~\citeauthor{wilder2018controlling}~[\citeyear{wilder2018controlling}], that is to maximize the increase in the margin of victory of the manipulator's candidate $c_0$.

Under mild assumptions on how a voter revises her preference ranking given a set of messages, we provide a tight characterization of the settings in which the maximum increase in the margin of victory can be efficiently approximated and of those in which any approximation turns out to be impossible unless $\mathsf{P} = \mathsf{NP}$. Specifically, we prove that whenever there is a set of $\tau$ messages making the candidate initially ranked by a voter as the least-preferred one to the most-preferred candidate, then there is a greedy poly-time algorithm guaranteeing an approximation factor $\rho$ depending on $\tau$. A surprisingly sharp transition phase occurs, instead, when no such a set of messages exists. In this case, no poly-time approximation algorithm is possible, unless $\mathsf{P} = \mathsf{NP}$, even when the approximation factor is a function in the size of the problem.

This last result poses a severe obstacle to the possibility for a manipulator to successfully alter the outcome of an election. Even more importantly, we show that this hardness result does not hold merely for worst-case (thus, potentially, knife-edge or rare) instances. Indeed, we prove that a large class of algorithms, that mainly include all approaches that have been recently adopted for social-influence problems, fail to compute  an empirically bounded approximation even on very simple networks, as undirected graphs with every node having a degree at most two or directed trees.
We also observe that, as a corollary of our characterization, the election control problem is inapproximable within any factor when positive only influence or negative only influence is possible. Let us remark that this inapproximability result is dramatically different from that obtained by~\citeauthor{wilder2018controlling}~[\citeyear{wilder2018controlling}], who show that, when the manipulator spreads only positive or only negative information on a single candidate, a constant approximation can be achieved.

Finally, we study some variants and generalizations of our model. More precisely, at first, we investigate the case in which seeds are \emph{bribed}, i.e., for each seed her ranking (and thus, her vote) is not affected by messages different from the one that she sends. Second, we study the case in which one uses alternative objective functions such as, e.g.,  the probability of victory. Third, we exmplore the case in which the diffusion occurs according to the \emph{linear threshold model}~\cite{kempe2003maximizing} and, finally,  the case in which different seeds may have a different cost for the manipulator. We show that our complexity results also hold in these variants.

\section{The Model}
\label{sec:model}
We consider an \emph{election control problem}, defined by a set of candidates $C=\{c_0,c_1,\dots,c_\ell\}$,
and a network of voters, modeled as a weighted directed graph $G = (V,E, p)$, where $V$ is the set of voters, $E$ is the set of edges, and $p \colon E \rightarrow (0,1]$ denotes the strength of the influence among voters. In particular, with a slight abuse of notation, we denote as $p(u,v)$ the strength of influence of $u$ on $v$.
Each voter $v$ has a preference ranking $\pi_v$ over the candidates.
We denote as $\pi_v(i)$ the $i$-th candidate in the rank $\pi_v$.
At the election time, the voter is assumed to cast a vote for $\pi_v(1)$.
For each candidate $c \in C$, we also denote as $V_c$ the set of voters that rank $c$ as first, i.e., $V_c = \left\{v \in V \mid \pi_v (1) = c\right\}$.

The election control problem involves a single  agent (i.e., the manipulator) whose objective is to spend a \emph{budget} $B$ of messages
to make $c_0$ win the election, by injecting in the network positive or negative information both 
about $c_0$ and about the other candidates.
Namely, our goal is to find a set of \emph{seeds} of $V$
and a set of at most $B$ messages sent by these seeds in order to maximize the increase in the margin of victory of $c_0$.

Specifically, let $S \subseteq V$ be a subset of voters and $I(s)=(q_0,...,q_{\ell})$ be
a vector  associated to each $s \in S$,
where $q_i \in \{-, \cdot,+\}$, with $q_i = +$ ($q_i = -$, respectively)
representing that $s$ sends a positive (negative, respectively) message about candidate $c_i$,
and $q_i = \cdot$ representing that no message is sent by $s$ about $c_i$.
For every $s \in S$, given a vector $I(s)$, we denote as $|I(s)|$ the number of messages $+$ or $-$ sent by seed $s$, i.e., $|I(s)| = |\{i \colon q_i \neq \cdot\}|$.
We assume that $|I(s)| \geq 1$ for every $s \in S$.
We also say that $s$ sends \emph{message} $(c,q)$ for $c \in C$ and $q \in \{+,-\}$ if $I(s,c)$, i.e., the $c$-th entry of $I(s)$, is $q$.
Given $(S,I)$, its cost 
is defined as the cumulative number of messages sent by the seeds,
i.e., 
$\sum_{s \in S} |I(s)|$.
A solution $(S, I)$ is \emph{feasible} does not exceed cost is 
$\leq B$.

For each feasible solution $(S, I)$, messages are supposed to spread over the network according to a \emph{multi-issue independent cascade} (MI-IC) model. In this model, given the graph $G = (V, E, p)$, we define the \emph{live-graph} $H = (V, E')$, where each edge $(u,v) \in E$ is included in $H$ with probability $p(u,v)$.
In this model, for each candidate $c \in C$ and for each type $q \in \{-,+\}$, we keep a set $A^t_{c,q}$ of \emph{active} voters at time $t$.
These sets initially contain the seeds sending the corresponding messages,
i.e. $A^0_{c,q} = \{s \in S \colon I(s,c) = q\}$ for every $c, q$.
Finally, at each time $t \geq 1$, we build $A^t_{c,q}$ as follows:
for each edge $(s,v) \in E'$, we consider the set $M(s,v)$ of messages $(c,q)$
such that $s \in A^{t-1}_{c,q}$ and $v \notin \bigcup_{i < t} A^{i}_{c,q}$;
then for each $(s,v)$ such that $M(s,v)$ is not empty,
we add $v$ to $A^t_{c,q}$ for every $(c,q) \in M(s,v)$.
The diffusion process of the message $(c,q)$ terminates at time  $T_{c,q}$ such that $A^{T_{c,q}}_{c,q} = \emptyset$.
Finally, the cascade terminates when the diffusion of each message $(c,q)$ terminates.

Roughly speaking, this process models the following realistic behavior:
each seed $s$ propagates all messages in $I(s)$ to neighbors;
however, a voter $v$ receiving messages from $s$ may not accept the information that these messages carry:
the acceptance probability indeed depends on the strength of the influence that $s$ has on $v$,
and hence, we say that $v$ will be \emph{activated} by $s$ only with probability $p(s,v)$ and this corresponds to $(s,v)$ being an edge of the live-graph $H$;
finally, each newly activated voter tries to influence the vote of their neighbors that have not yet accepted the spreading information, and to this aim they simply forward the messages they received. That is, freshly convinced voters act as new seeds, and the process continues as long as there is some voter willing to play the role of the seed.
Note that we assume that each node simply forwards all messages that she receives, but each node processes a given message only once.

The reception by voter $v$ of messages and the acceptance of their content do not only influence whether $v$ will or not forward these messages through the network, but also affect her preference ranking. Denote with $R = \{(c,q)\}_{c \in C}$ a set of received messages. A \emph{ranking revision function} $\phi$ associates each pair $(\pi,R)$ a new ranking $\pi'$ obtained by revising ranking $\pi$ with the set of received messages $R$. We study a general class of ranking revision functions, described below, that extends that one used by~\citeauthor{wilder2018controlling}~[\citeyear{wilder2018controlling}].

Given a feasible solution $(S,I)$ and a live graph $H$, we let, for every $v \in V$, $\pi^*_v(S,I,H)$ be the ranking at the end of the MI-IC model.
Moreover, for each candidate $c \in C$, we also denote as $V^*_c$ the set of voters that rank $c$ as first at the end of the diffusion process, i.e., $V^*_c(S,I,H) = \left\{v \in V \mid \pi^*_v (1) = c\right\}$. We finally let the \emph{margin of victory} of $(S, I, H)$ to be
$$\MoV(S,I,H) = |V^*_{c_0}(S,I,H)| - \max_{c \neq c_0} |V^*_{c}(S,I,H)|,$$
that denotes the number of votes that $c_0$ needs to win the election, if the first term is lower than the second, and the advantage of $c_0$ with respect to the second best ranked candidate, otherwise.
Finally, the \emph{effectiveness} of $(S, I)$, denoted as $\Delta_\MoV(S,I,H)$ is given by the increase in the margin of victory due to this choice of seeds and messages, i.e.,
$$\Delta_\MoV(S,I,H) = \MoV(S,I,H) - \MoV(\emptyset,(),H).$$

Hence, the election control problem consists in computing $(S^*, I^*) = \arg \max_{(S,I)} \Expec_H[\Delta_\MoV(S,I,H)]$,
where expectation is taken on the probability of live graphs $H$.

An algorithm $A$ is said to always return a $\rho$-approximation for the election control problem with $\rho \in [0,1]$ potentially depending on the size of the problem, if, for each instance of the problem,
it returns $(S, I)$ such that $\Expec_H[\Delta_\MoV(S,I,H)] \geq \rho\, \Expec_H[\Delta_\MoV(S^*,I^*,H)]$.

\paragraph{Ranking Revision Functions.} We consider a general class of ranking revision functions $\phi$ defined as follows. When there is a single message on the network, the ranking revision is as the one prescribed by~\citeauthor{wilder2018controlling}~[\citeyear{wilder2018controlling}]. That is, a message $(c,+)$ causes that $c$ switches her position with the candidate above, whereas each message $(c,-)$ causes that $c$ switches her position with the candidate below. Instead, if $v$ receives both $(c,+)$ and $(c,-)$, then, she discards them and behaves as if no message was received about $c$. When there are messages on multiple candidates, the ranking revision functions  $\phi$ satisfy the following mild properties.
\begin{itemize}
\item Be given a ranking $\pi$ and two message sets $R, R'$, differing only for a single candidate, say $c_i$, such that $R'$ contains $(c_i, -)$, while $R$ contains $(c_i, +)$ or $(c_i, \cdot)$. If $c_j\neq c_i$ is the most-preferred candidate of the ranking returned by $\phi(\pi,R)$, then $c_j$ must also be the most-preferred candidate of the ranking returned by $\phi(\pi,R')$. This is equivalent to say that, if candidate $c_j$ is the most preferred, then she keeps to be the most preferred when an additional negative message on an alternative candidate is received. 
\item Be given two possible rankings $\pi,\pi'$ of a node $v$ that differ only for the position of candidate $c_j$, in $\pi$ not being worse than in $\pi'$. Be given a message set $R$. If $c_j$ is the most preferred candidate in the ranking returned by $\phi(\pi',R)$, then $c_j$ must be the most preferred also in the ranking returned by $\phi(\pi,R)$.
\end{itemize}

Some examples of ranking revision functions may be provided, e.g., based on different orderings with which the single messages of $R$ are applied to switch candidates in the ranking or based on scoring rules. Consider the first case and, for the sake of presentation, focus on only three candidates. Consider a voter $v$ with $\pi_v = c_0 \succ c_1 \succ c_2$. 
We can have different resulting rankings for the same set of messages, e.g.:
\begin{itemize}
 \item $(\cdot,-,+)$, that can result in $\pi^*_v = c_2 \succ c_0 \succ c_1$, if we first apply the change induced by message $(c_1,-)$, or in  $\pi^*_v = c_0 \succ c_2 \succ c_1$, otherwise;
 \item $(-,-,\cdot)$, that can result in  $\pi^*_v = c_0 \succ c_1 \succ c_2$, if we first apply the change induced by message $(c_0,-)$, or in  $\pi^*_v = c_2 \succ c_0 \succ c_1$, otherwise.
\end{itemize}
Among all the possible $\phi$, we focus on the most extremal rules: the \emph{pessimistic} ranking revision function assumes that every candidate not ranked in the first two positions can never become the most preferred candidate (thus when applied to the case of three candidates as described above, it implies that ties are always broken in favor of the ranking $\pi^*_v$ such that $\pi^*_v(1) \neq c_2$); the \emph{optimistic} ranking revision function instead does exactly the opposite and always breaks ties by favoring the worst ranked candidate. Now, we consider the case in which the ranking revision functions are based on scoring rules and we report an example named \emph{score-based}. Here, we assume that exchanged messages have a more prominent role in the decision about who $v$ will vote with respect to her initial ranking. In order to describe this function, it would be convenient to assume that voter $v$ assigns a score to each candidate based on her position in $\pi_v$: she assigns score $|C|$ to the first-ranked candidate, score $|C|-1$ to the second, and so on; moreover, for each candidate $c$, the message $(c,+)$ corresponds to increase its score by $1 + \varepsilon$, and the message $(c,-)$ corresponds to decrease this score by the same amount; the final ranking is then computed with respect to these updated scores. Notice that the score-based ranking revision function is well defined regardless the number of candidates.
Table~\ref{table:ties} summarizes the behavior of the pessimistic, optimistic, and score-based ranking revision functions when $|C|=3$.
\begin{table}[!ht]
	\small
	\centering
	\renewcommand{\arraystretch}{1.5}
	\begin{tabular}{|c|c|c|c|}
		\hline
							 		&  \textbf{pessimistic} 			&  \textbf{optimistic} 			&	\textbf{score-based}		\\ \hline
		$(+,\text{any},\text{any})$ 			& $c_0$ 						& $c_0$ 					& 		$c_0$ 			\\ \hline
		$(\cdot,+,\text{any})$ 			& $c_1$ 						& $c_1$ 					& 		$c_1$ 			\\ \hline
		$(\cdot,-,\text{any})$ 				& $c_0$ 						& $c_0$ 					& 		$c_0$ 			\\ \hline
		$(-,+,\text{any})$ 				& $c_1$ 						& $c_1$ 					& 		$c_1$ 			\\ \hline
		$(-,\cdot,+)$ 					& $c_1$ 						& $c_2$ 					& 		$c_2$ 			\\ \hline
		$(-,\cdot,-)$ 					& $c_1$ 						& $c_1$ 					& 		$c_1$ 			\\ \hline
		$(-,-,+)$ 						& $c_0$ 						& $c_2$ 					& 		$c_2$ 			\\ \hline
		$(-,-,\cdot)$ 					& $c_0$ 						& $c_2$ 					& 		$c_0$ 			\\ \hline
		$(-,-,-)$ 						& $c_0$ 						& $c_0$ 					& 		$c_0$ 			\\ \hline
	\end{tabular}
	\renewcommand{\arraystretch}{1}
	\caption{Most-preferred candidate in the pessimistic, optimistic, and score-based ranking revision functions with at least two messages.}
\label{table:ties}
\end{table}

\section{General Results}
Initially, we observe that for every ranking revision function, the following holds:
\begin{itemize}
\item given a node whose preferences are such that $c_0$ is the least-preferred candidate, either $c_0$ becomes the most-preferred candidate by the message set $\{(c_0,+),(c_i,-) \textnormal{ for every $i >0$ } \}$ or there is no $R$ such that $c_0$ can become the most-preferred candidate;
\item if the message set $\{(c_0,+),(c_i,-) \textnormal{ for every $i >0$ } \}$ makes candidate $c_0$ the most-preferred one for a given ranking $\pi$ when $c_0$ is the least-preferred candidate, then the same message makes $c_0$ the most-preferred candidate for any other ranking $\pi'$.
\end{itemize}
We introduce the following definition that we use to  capture the phase transition of the election control problem.
\begin{definition}
The pair $(\phi, |C|)$ composed of a ranking revision function and a number of candidates is said \emph{least-candidate manipulable} if, when $c_0$ is the least-preferred candidate for a node, message set $\{(c_0,+),(c_i,-) \textnormal{ for every $i >0$ } \}$ does always make $c_0$ be the most preferred.
\end{definition}

\subsection{Inapproximability Results}

\begin{theorem}
\label{thm:inapprox}
Be given the set of instances in which $(\phi, |C|)$ is not least-candidate manipulable. For any $\rho > 0$ even depending on the size of the problem, there is not any poly-time algorithm returning a $\rho$-approximation to the election control problem, unless $\mathsf{P} = \mathsf{NP}$.
\end{theorem}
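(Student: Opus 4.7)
The plan is to give a polynomial-time gap-creating reduction from an NP-hard problem with all-or-nothing solution structure (most naturally 3-SAT or exact cover by 3-sets) that produces election instances satisfying $\mathrm{OPT} \geq 1$ on YES inputs and $\mathrm{OPT} = 0$ on NO inputs. Granting such a gap, a polynomial-time $\rho$-approximation algorithm, for any $\rho > 0$ possibly depending on the input size, would return a value $\geq \rho > 0$ on YES instances and at most $0 = \Expec_H[\Delta_\MoV(\emptyset,(),H)]$ on NO instances, so its output would decide the source NP-hard problem in polynomial time, contradicting $\mathsf{P} \neq \mathsf{NP}$.

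The construction has three ingredients. First, the two observations opening this section, combined with the hypothesis that $(\phi,|C|)$ is not least-candidate manipulable, imply that \emph{every} voter whose ranking $\pi^{\mathrm{bad}}$ places $c_0$ in the last position is completely ``locked'': no received message set can make $\phi(\pi^{\mathrm{bad}},\cdot)$ place $c_0$ on top. Such locked voters form a background block that can never contribute a vote to $c_0$. Second, I would calibrate this background so that $c_0$ has zero initial votes while the other candidates are tied at the current maximum, and choose the background voters' secondary preferences so that every message-induced reshuffle merely transfers mass among already-tied leaders, keeping $\max_{c \neq c_0}|V^*_c|$ unchanged. Third, I would introduce one ``witness'' voter $w$ whose initial ranking places $c_0$ in a non-last position and wire a subnetwork between the potential seeds and $w$ that encodes the source NP-hard instance: variable gadgets realize binary choices, clause gadgets enforce consistency, and $w$'s preference together with $\phi$ are chosen so that her vote switches to $c_0$ only under a specific message set $M^\star$, which is deliverable through the MI-IC cascade if and only if the source instance is a YES. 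The budget $B$ is tuned to forbid wasteful message combinations that could spuriously trigger her conversion.

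The hardest step will be showing that on NO inputs \emph{every} feasible solution has $\Delta_\MoV = 0$, not just a bounded positive value. Two hazards must be ruled out. First, the manipulator might try to improve the MoV by reshuffling background voters rather than by converting $w$; for this I would combine the locked-ranking property with the tied-leader configuration and the second (monotonicity) axiom on $\phi$ to show every such reshuffle is MoV-neutral. Second, the manipulator might try to bypass the gadget---by seeding $w$ directly, or by injecting message combinations that accidentally trigger her conversion---so the first axiom on $\phi$ (the top choice is preserved under additional negative messages on alternative candidates), together with a tight budget and a cascade topology that forces the relevant messages to originate inside the clause gadgets, must be used to prove that only a genuine $M^\star$, corresponding to a certificate of the source NP-hard instance, can convert $w$. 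Once both of these are in place, the $(0, \geq 1)$ gap yields the claimed inapproximability within any factor.
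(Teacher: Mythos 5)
Your outer gap argument (a $(0,\ge 1)$ gap in $\Delta_\MoV$ defeats any $\rho$-approximation) and your first ingredient (voters ranking $c_0$ last are permanently locked, by non-least-candidate manipulability) both match the paper. But the central mechanism of your reduction does not work. The MI-IC cascade performs no computation: with $p\equiv 1$, a node receives message $(c,q)$ if and only if it is reachable from some seed sending $(c,q)$, and the messages it receives are simply the union of the message vectors of the seeds that reach it. There is no mechanism by which ``clause gadgets enforce consistency,'' so a SAT-style encoding has nothing to latch onto; the only combinatorial resources are reachability and the message budget, which is why a \emph{covering} problem is the natural source. Worse, the single-witness design collapses on its own: seed placement is unrestricted, so if the converting set $M^\star$ satisfies $|M^\star|\le B$ the manipulator simply takes $S=\{w\}$ with $I(w)=M^\star$ and wins on every instance (no topology can forbid seeding $w$ itself), while if $|M^\star|>B$ the conversion is impossible on every instance. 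Either way the answer is instance-independent and no gap is created. Your proposed fix---``a tight budget and a cascade topology that forces the relevant messages to originate inside the clause gadgets''---cannot be realized in this model.

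The paper escapes this trap by inverting your design: it makes $c_0$'s vote count \emph{immutable} (every voter ranks $c_0$ either first or last, so $|V^*_{c_0}|=|V_{c_0}|$ by the locking property) and derives the entire gap from redistributing votes among the \emph{other} candidates across many voters. Concretely, it reduces from set cover with budget $B=h+1$: one message flips a clique of $c_1$-voters to $c_2$, which forces the manipulator to compensate by flipping all $n$ element-voters (plus the $h$ seeds) from $c_2$ back to $c_1$ in the component encoding the instance; since the element-nodes are reachable only from the set-nodes, this succeeds within budget iff a set cover of size $h$ exists. A large, carefully balanced clique pins the remaining candidates' tallies (using the two axioms on $\phi$ to show no vote can move there without making some candidate too large). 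If you want to salvage your write-up, replace the witness-plus-gadgets core with this ``$c_0$ is frozen, so lower the runner-up via a covering structure'' mechanism; the rest of your outline can stay.
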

\begin{proof}
The proof uses a reduction from the well-known $\mathsf{NP}$-hard problem \emph{set cover}. This problem, given a finite set $N=\{z_1,\dots,z_n\}$ of elements, a collection $X=\{x_1,..,x_m\}$ of sets with $x_i \subseteq N$, and an integer $h$, asks whether there is a collection $X^*\subseteq X$ such that $|X^*|\le h$ and $\cup_{x_i \in X^*}\ x_i= N$.

Given an instance of set cover, we build an instance of the election control problem with $\ell+1$ candidates as follows.
The voters' network $G$, showed in Figure~\ref{fig:reduction}, consists of four disconnected components, that we denote as $G_1, \ldots, G_4$.
Note that all edges of $G$ have $p(u,v) = 1$.

\begin{figure}
	\includegraphics[width=1\linewidth]{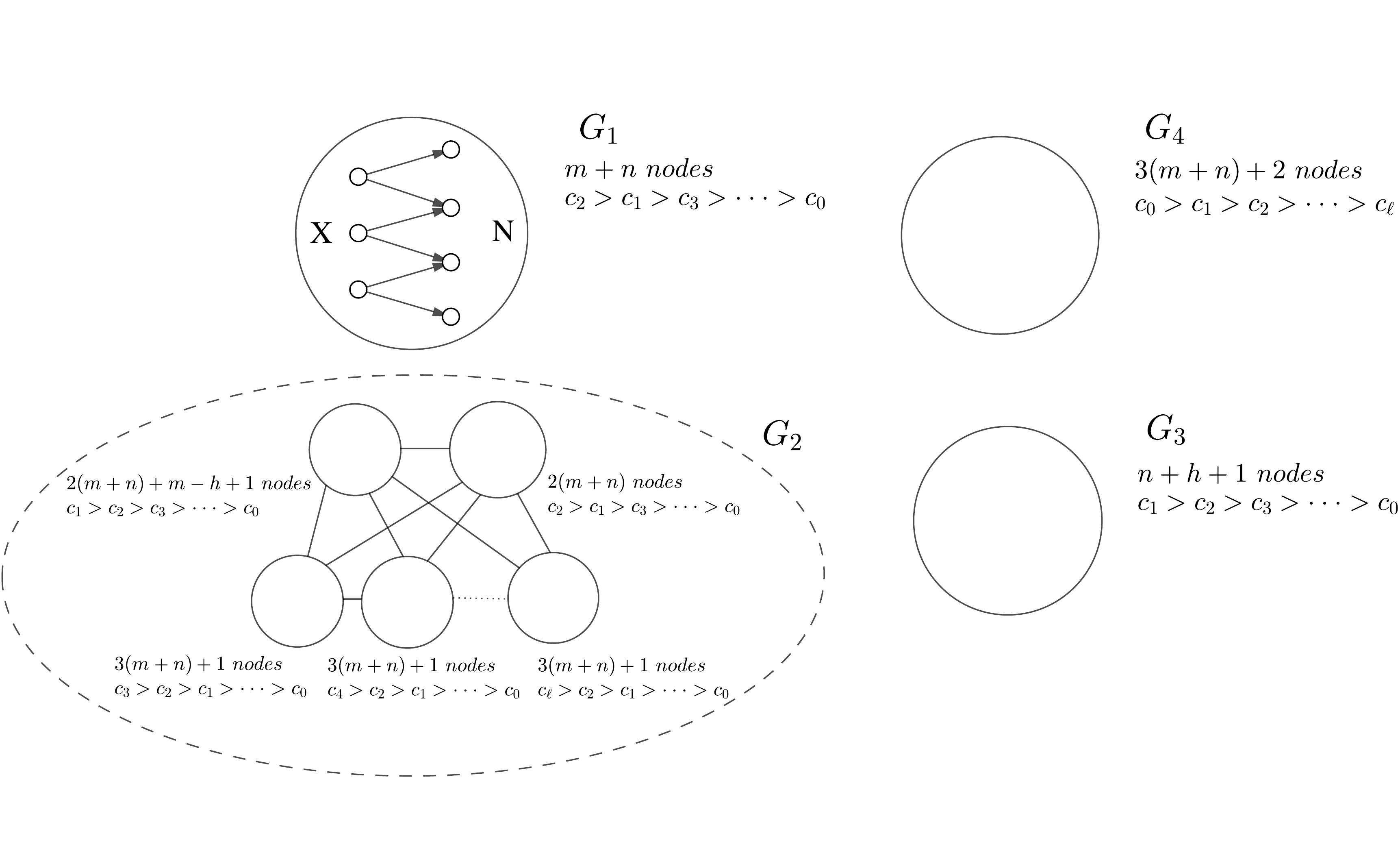}
	\caption{Reduction for an election scenario characterized by $\ell + 1$ candidates.}
	\label{fig:reduction}
\end{figure}

The component $G_1$ has $m+n$ nodes and it used to model the set cover instance. Indeed, for each $z_i \in N$, we have in $G_1$ a node $v_{z_i}$;
moreover, for each $x_i \in X$, we have in $G_1$ a node $v_{x_i}$
with an edge toward $v_z$ for each $z \in x_i$.
All voters $v$ corresponding to nodes in $G_1$ have ranking $\pi_{v}=c_2>c_1>c_3>\cdots>c_{\ell}>c_0$.

The component $G_2$ is a clique of $(3\ell-5)(m+n)+m+\ell-h-4$ nodes, such that
\begin{itemize}
\item $2(m+n)$ nodes have ranking $\pi_{v}=c_2>c_1>c_3>\ldots>c_{\ell}>c_0$;
\item $2(m+n)+m-h+1$ nodes have ranking $\pi_{v}=c_1>c_2>c_3>\ldots>c_{\ell}>c_0$;
\item for every $i \not \in \{0,1,2\}$, $3(m+n)+1$ nodes have ranking $\pi_{v}=c_i>c_2>c_1>c_3>\ldots>c_{i-1}>c_{i+1}>\ldots>c_{\ell}>c_0$.
\end{itemize}

The component $G_3$ is a clique of $n+h+1$ nodes, such that every node has ranking $\pi_{v}=c_1>c_2>c_3>\ldots>c_{\ell}>c_0$.

The component $G_4$ is a clique of $3(m+n)+2$ nodes, such that every node has ranking $\pi_{v}=c_0>c_1>c_2>c_3>\ldots>c_{\ell}$.

Note that $|V_{c_0}| = 3(m+n)+2$, $|V_{c_1}| = 3(m+n)+2$, $|V_{c_2}| = 3(m+n)$, and $|V_{c_i}| = 3(m+n)+1$ for every $i \neq 0, 1, 2$. Hence, $\MoV(\emptyset,(),H) = 0$.
Finally, we set the budget $B=h+1$.

We next prove that this instance allows a feasible solution $(S^*, I^*)$ with $\MoV(S^*,I^*,H) > 0$ if and only if there a set cover of size at most $h$.

\textbf{(If)} Let $X^* \subseteq X$ be the set cover of size $h$\footnote{If there is a set cover $X^*$ of size less $h$, then we can achieve a set cover of size exactly $h$, by padding $X^*$ with arbitrary element in $X \setminus X^*$.} (i.e., $|X^*| = h$ and $\cup_{x_i \in X^*}=N$).
 Then we set $(S^*, I^*)$ as follows: for every $x_i \in X^*$, we include $v_{x_i} \in S^*$ and we set $I^*(v_{x_i})$ such that $q_1 = +$, and $q_i = \cdot$ for every $i \neq 1$; moreover, we include in $S^*$ an arbitrary node $v \in G_3$ and we set $I^*(v)$ such that $q_2 = +$, and $q_i = \cdot$ for every $i \neq 2$.
 
 It directly follows that $(S^*, I^*)$ is feasible.
 We next show that $\MoV(S^*,I^*,H) > 0$.
 Indeed, it directly follows that $|V^*_{c_i}(S^*,I^*,H)| = |V_{c_i}|$ for every $i \neq 1,2$. Moreover, the diffusion of messages leads each voter corresponding to nodes in $G_3$ to prefer $c_2$ to $c_1$. Moreover, the dynamics leads $h + n$ voters in $G_1$ (i.e., the seeds and the ones corresponding to elements $z_i \in N$) to prefer $c_1$ to $c_2$.
 Hence, $|V^*_{c_1}(S^*,I^*,H)| = |V_{c_1}| - |G_3| + h+n = 3(m+n)+1$,
 and $|V^*_{c_2}(S^*,I^*,H)| = |V_{c_2}| + |G_3| - h-n = 3(m+n)+1$.
 Hence, $\MoV(S^*,I^*,H) = 1$, as desired.
 
\textbf{ (Only if)} Suppose that there exists a pair $(S^*, I^*)$ such that $\MoV(S^*,I^*,H) > 0$. Note that, since $c_0$ is ranked either as the first or as the last by all voters and the instance is not least-candidate manipulable, then $|V^*_{c_0}(S^*,I^*,H)| = |V_{c_0}|$. Hence, in order to have $\MoV(S^*,I^*,H) > 0$, it must be the case that the number of voters whose most-preferred candidate is $c_1$ decreases by at least one unit, the number of voters whose most-preferred candidate is $c_2$ increases by at most one unit, and the number of voters whose most-preferred candidate is $c_i$, for $i \not \in\{ 0, 1, 2\}$, does not increase.

 We first consider the clique $G_2$, proving two important properties.
 Suppose, that a candidate $c_i$ loses votes in favor of $c_j$.
 Then there is a voter receiving messages that allow $c_j$ to pass in the ranking candidate $c_i$. 
 However, since $G_2$ is a clique, all nodes receive these messages,
 and thus all votes of $c_i$ are taken by $c_j$.
 Moreover, no candidate can take the votes of another candidate without losing her initial votes, otherwise she would have at least $4(m+n)+m-h+1$ votes and $\MoV(S^*,I^*,H) \leq 0$.
 We can now prove that $c_2$ cannot change her voters in $G_2$.
 Suppose $c_2$ gains votes in $G_2$ and another candidate $c_j$ takes her votes. Notice that $c_j$ must lose her initial votes. Since the rankings of the voters of $c_2$ and $c_j$ differ only for the ranking of $c_j$ and all voters receive the same messages, the second assumption on the 
 ranking revision function (see previous section) is not satisfied.

 
 
 Then, $c_2$ must take voter in $G_3$. Again, since $G_3$ is a clique, it must be that all votes of $c_1$ are taken by candidate $c_2$.
 
 Thus, $c_1$ loses all its voters in $G_3$ in favor of $c_2$. Note that a single message is sufficient (a positive message for $c_2$) to this aim. However, this implies that $c_2$ must lose $n+h$ voters in $G_1$, otherwise $|V^*_{c_2}(S^*,I^*,H)| > 3(m+n)+n+h+1-(n+h)$ and thus $\MoV(S^*,I^*,H) \leq 0$, that contradicts our hypothesis. Observe that these votes must be necessarily lost in favor of $c_1$.
 
 Hence, we are left with $h$ available messages to make $n+h$ voters to change their vote from $c_2$ to $c_1$. Observe that, in order to make a voter to change, it is sufficient a single message (a positive message for $c_1$). However, if less than $h$ seeds sending this message are located among nodes $v_{x_i}$ for $x_i \in X$, then less than $n+h$ voters will change their mind (since nodes $v_{x_i}$ for $x_i \in X$ have no ingoing edges).
 
 Finally, we must have that the $h$ seeds in $G_1$ are neighbors of every node $v_{z_i}$ for $z_i \in N$. Hence, the set $X^* = \{x_i \colon v_{x_i} \in S^*\}$ has size $h$ and, by construction of $G_1$, $\bigcup_{x \in X^*} x = N$, i.e. $X^*$ is a set cover of size at most $h$.

Hence we can conclude that a feasible solution $(S^*,I^*)$ with $\Delta_{\MoV}(S^*,I^*,H) > 0$ exists if and only if a solution for set cover exists. Note also that if a solution with $\Delta_{\MoV}(S^*,I^*,H) > 0$ exists, then $\Delta_{\MoV}(S,I,H) > 0$ even for any $\rho$-approximate solution $(S,I)$, regardless of the value of $\rho$. Thus, if a polynomial time $\rho$-approximate algorithm for election control problem exists, then the set cover problem can also be solved in poly-time, implying that $\mathsf{P} = \mathsf{NP}$.
\end{proof}

\begin{remark}
 For sake of presentation, in the above proof we considered a disconnected graph. However, if we restrict $\rho$ to be upper bounded by an exponential function of the size of the problem, the proof can be immediately extended to strongly connected graphs, by adding in the graph described above $r$ edges to make it strongly connected.
 If we denote with $\varepsilon > 0$ the weight of these edges, we observe that there are $2^r - 1$ live graphs involving at least one of these edges, and each of them has probability at most $\varepsilon$. Moreover, in each of these live graphs $H$, it holds $-|N| \leq \MoV(S^*,I^*, H) \leq |N|$, where $N$ is the set of nodes in the above proof. Thus, by denoting with $H^*$ the live graph that does not involve any of these edges, we have that $\Expec_H \left[\MoV(S^*,I^*, H)\right] = \MoV(S^*,I^*, H^*) \cdot \Pr(H^*) + \sum_{H \neq H^*} \MoV(S^*,I^*, H) \cdot \Pr(H)$. If we assume $\MoV(S^*,I^*, H^*) = 1$, then
 $\Expec_H \left[\MoV(S^*,I^*, H)\right] \geq (1-(2^r-1)\varepsilon) - (2^r-1)|N|\varepsilon$.
 If, instead, $\MoV(S^*,I^*, H^*) = 0$, then $\Expec_H \left[\MoV(S^*,I^*, H)\right] \leq (2^r-1)|N|\varepsilon$.
 If $\varepsilon < \frac{\rho}{(2^r-1) \left((\rho+1)|N| + \rho\right)}$, we have that any $\rho$-approximate solution $(S', I')$ has $\Expec_H \left[\MoV(S',I', H)\right] > \frac{|N|}{(2^r-1) \left((\rho+1)|N| + \rho\right)}$ in the case that $\MoV(S^*,I^*, H^*) = 1$  and $\Expec_H \left[\MoV(S',I', H)\right] < \frac{\rho|N|}{(2^r-1) \left((\rho+1)|N| + \rho\right)}$ otherwise.
 
 Hence, any $\rho$-approximate equilibrium is able to distinguish whether $\MoV(S^*,I^*, H^*) = 1$ or not, and thus to solve the set cover problem.\qed
\end{remark}

Theorem~\ref{thm:inapprox} essentially states that whenever there is no way for making the least-preferred candidate for a node to become the most-preferred one of that node, then there is no chance that a manipulator designs an algorithm allowing her to maximize the increment in the margin of victory of the desired candidate regardless the adopted ranking revision function. However, Theorem~\ref{thm:inapprox} does not rule out that the worst-case instances on which the manipulator's algorithm fails are very rare and/or knife-edge. Even if this was the case, we show that, if the manipulator greedily chooses the messages to send, then her approach fails even for simple graphs, namely graphs with all nodes having a degree of $2$ or trees.

Specifically, given a set $S$ of seeds and corresponding messages $I$,
we denote as $\mathcal{F}(S,I)$ the set of pairs $(s, I(s))$, with $s \notin S$ such that either
 $\Expec_H \left[\MoV(S \cup \{s\},(I, I(s)), H)\right] > \Expec_H \left[\MoV(S,I, H)\right]$
 or $\Expec_H \left[V^*_{c_0}(S \cup \{s\},(I, I(s)), H)\right] > \Expec_H \left[V^*_{c_0}(S,I, H)\right]$. That is, $\mathcal{F}(S,I)$ includes all the ways of augmenting a current solution so that either the margin of victory of $c_0$  or the number of its votes increases.
 Then, we say that an algorithm to solve the election control problem uses the \emph{greedy} approach, if it works as follows:
\begin{itemize}
 \item it starts with $S=\emptyset$ and $I=()$;
 \item until the set $\mathcal{F}(S,I)$ is not empty, choose one $(s, I(s)) \in \mathcal{F}(S,I)$ and set $S = S \cup \{s\}$ and $I=(I,I(s))$.
\end{itemize}
We then show that every algorithm in this class fails even for very simple networks.

\begin{proposition}
\label{prop:example1}
Be given the set of instances in which $(\phi, |C|)$ is not least-candidate manipulable. For any $\rho > 0$ even depending on the size of the problem, no algorithm following the greedy approach  returns a $\rho$-approximation, even in undirected graphs in which each node has degree at most~$2$.
\end{proposition}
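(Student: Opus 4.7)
The plan is to exhibit, for some $(\phi,|C|)$ that is not least-candidate manipulable, a single family of instances realized on a graph of maximum degree~$2$ (specifically, a disjoint union of isolated nodes, so every degree is~$0$) on which every greedy algorithm is forced to halt immediately at the empty solution, while some feasible $(S^*,I^*)$ achieves $\Delta_\MoV(S^*,I^*,H)\geq 1$. Since the empty solution has effectiveness exactly~$0$, the ratio between greedy and the optimum is then~$0$ on this instance, ruling out any $\rho>0$ (however slowly it decays with the instance size).

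Fix $(\phi,|C|)$ with $|C|\geq 4$ and not least-candidate manipulable, pick any integer $k\geq 3$, and set $B=2$. I would use $3k-1$ isolated voters: $k$ voters of type $U$ with ranking $c_1\succ c_3\succ c_2\succ c_4\succ\cdots\succ c_{|C|-1}\succ c_0$, $k$ voters of type $W$ with ranking $c_2\succ c_3\succ c_1\succ c_4\succ\cdots\succ c_{|C|-1}\succ c_0$, and $k-1$ voters of type $Z$ having $c_0$ as most preferred. The initial counts are $|V_{c_0}|=k-1$, $|V_{c_1}|=|V_{c_2}|=k$, and $|V_{c_j}|=0$ for every $j\geq 3$, so $\MoV(\emptyset,(),H)=-1$. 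To witness that the optimum improves, take $u\in U$ and $w\in W$ and set $I^*(u)=(c_1,-)$, $I^*(w)=(c_2,-)$: the single-message semantics moves $c_1$ one step down in $\pi_u$ and $c_2$ one step down in $\pi_w$, so both vote for $c_3$, yielding counts $(k-1,k-1,k-1,2,0,\dots,0)$ and $\Delta_\MoV(S^*,I^*,H)=1$.

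The crux is to show $\mathcal{F}(\emptyset,())=\emptyset$, i.e., that no single seed with any message set strictly increases $\Expec_H[\MoV]$ or $\Expec_H[|V^*_{c_0}|]$. Since $G$ has no edges, adding a seed $s$ with any $I(s)$ can alter only $s$'s own vote. For $s\in U\cup W$, $c_0$ is least-preferred in $\pi_s$; the general result at the beginning of Section~3 guarantees, because $(\phi,|C|)$ is not least-candidate manipulable, that no message set can push $c_0$ to the top of $s$'s revised ranking, so $|V^*_{c_0}|$ is unchanged. Writing $c_j\neq c_0$ for $s$'s new most-preferred candidate, a small case analysis shows that the tie $|V_{c_1}|=|V_{c_2}|=k$ forces $\max_{c\neq c_0}|V^*_c|\geq k$: if $c_j\in\{c_1,c_2\}$ the shift leaves the max at $k$ or pushes it to $k+1$, and if $c_j\notin\{c_0,c_1,c_2\}$ the untouched tied leader is still at $k$. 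For $s\in Z$, the new most-preferred candidate is either $c_0$ (producing no change at all) or some $c_j\neq c_0$ (in which case $|V^*_{c_0}|$ strictly decreases and $\MoV$ drops as well). Hence the strict-improvement condition defining $\mathcal{F}$ fails in every case.

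The main obstacle I expect is the exhaustive treatment of multi-message $I(s)$, where $\phi$'s behaviour is constrained only by the two mild axioms of Section~2. What makes the argument go through is that, for the quantities appearing in $\mathcal{F}$, only $s$'s new most-preferred candidate matters, and that candidate is controlled by two structural ingredients already packaged in the paper: the observation that under a non--least-candidate-manipulable $(\phi,|C|)$ the candidate $c_0$ can never reach the top for a voter who has it at the bottom, and the arithmetic of the tied top $|V_{c_1}|=|V_{c_2}|=k$ that prevents any single-voter shift among non-$c_0$ candidates from lowering the max. Combining the two yields $\Delta_\MoV(\emptyset,(),H)=0<\rho\cdot\Delta_\MoV(S^*,I^*,H)$ for every $\rho>0$, so no greedy algorithm is a $\rho$-approximation on this instance.
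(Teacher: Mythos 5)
Your construction is sound in its core mechanism, which is in fact the same one the paper exploits: the two leading non-$c_0$ candidates are tied at $k$ votes, so no single seed (with any message vector) can strictly increase either $|V^*_{c_0}|$ or $\MoV$ --- any defection from $c_1$ or $c_2$ toward a third candidate leaves the other tied leader at $k$, any defection toward the other leader pushes the maximum to $k+1$, and $c_0$ can never gain a vote because every voter ranking $c_0$ last is immune by the non-least-candidate-manipulability observation at the start of Section~3. Hence $\mathcal{F}(\emptyset,())=\emptyset$, greedy returns the empty solution with $\Delta_\MoV=0$, while the two-seed optimum achieves $\Delta_\MoV=1$. Where you differ from the paper is the realization: the paper uses three candidates and a genuine network (rings and paths on which diffusion actually happens), whereas you strip the network down to isolated vertices and pay for it by introducing a fourth candidate $c_3$ that can absorb one vote from each of $c_1$ and $c_2$. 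Your version is more elementary (a single live graph, no diffusion analysis), and it makes transparent that the failure of greedy is purely a failure of local search against a tie, not an artifact of cascade dynamics.

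The one genuine limitation is the restriction to $|C|\geq 4$. The paper's own example covers $|C|=3$, which is the flagship non-least-candidate-manipulable case (the pessimistic rule with three candidates, Proposition~\ref{prop:pessimistic3}), and your edgeless trick provably cannot be adapted there: with three candidates and no edges, every voter who ranks $c_0$ last can only shuffle her vote between $c_1$ and $c_2$, so $\max\left(|V^*_{c_1}|,|V^*_{c_2}|\right)\geq k$ for \emph{every} feasible solution, i.e., even the optimum gains nothing and the instance is vacuous. A third non-$c_0$ candidate to receive the defecting votes, or actual diffusion as in the paper's Figure~\ref{fig:example1}, is essential. If the proposition is read as quantified over every non-least-candidate-manipulable pair $(\phi,|C|)$ --- which is how the paper uses it --- your proof leaves the three-candidate case open and needs to be supplemented by a construction with edges, such as the paper's. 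Under the weaker reading (some instance in the class defeats every greedy algorithm), your argument is complete, since non-manipulable pairs with four or more candidates exist (e.g., the score-based rule, Proposition~\ref{prop:score4neg}). A cosmetic point: writing $I^*(u)=(c_1,-)$ conflates a message with the message vector; it should be the vector whose $c_1$-entry is $-$ and whose other entries are $\cdot$.
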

\begin{proof}
 Consider the graph given in Figure~\ref{fig:example1}.
 According to the preference rankings of the nodes, the candidate $c_2$ collects 5 votes, while candidates $c_1$ and $c_0$ gather 7 votes each. So the actual margin is equal to zero. Suppose the budget $B$ be~2.

\begin{figure}[!htb]
	\centering
	\includegraphics[width=1.1\columnwidth]{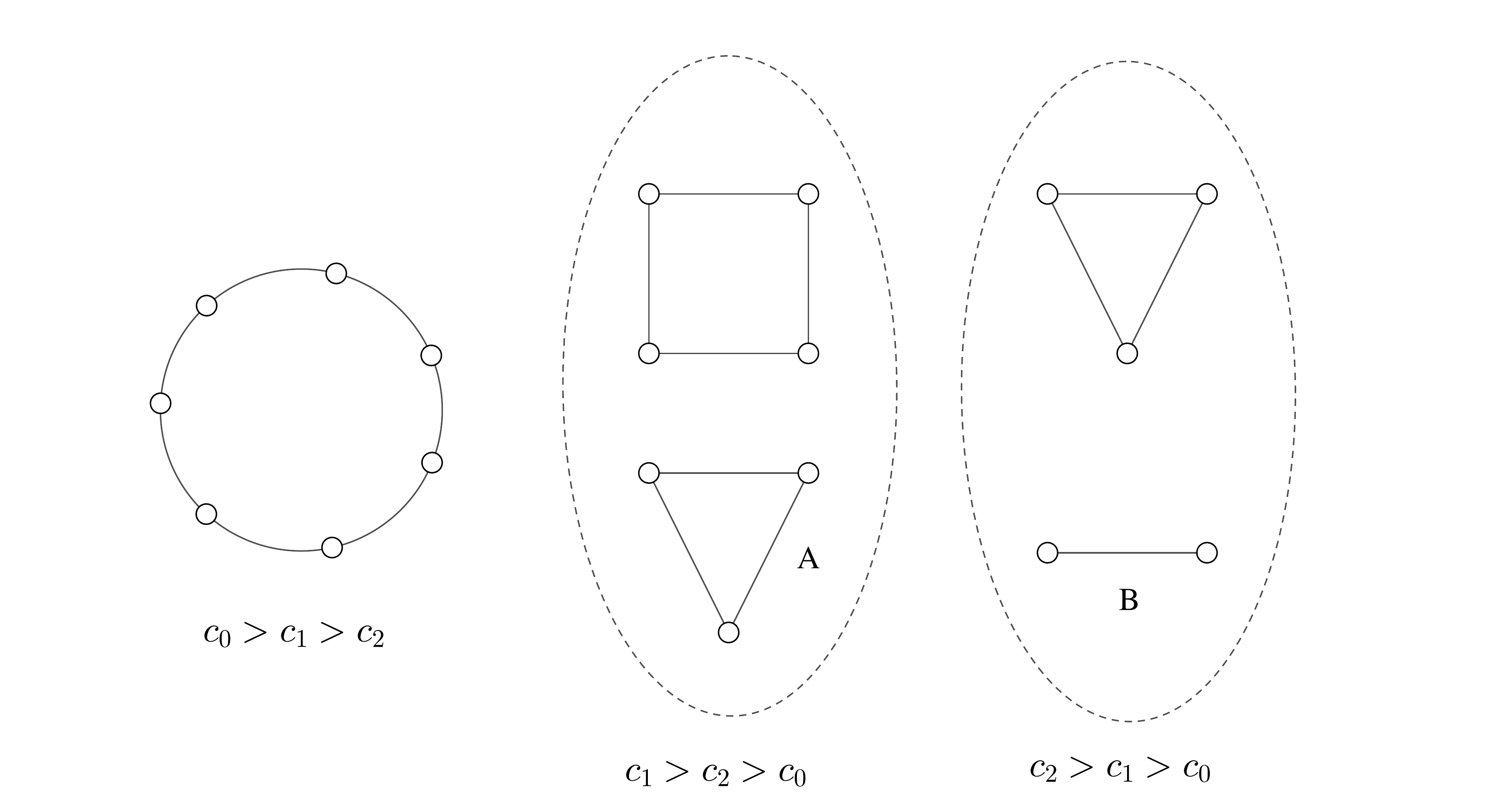}
	\caption{example of a small undirected network in which the greedy algorithm performs badly ($|A|=3$, $|B|=2$).}
	\label{fig:example1}
\end{figure}

Since, except for the nodes that already vote for her, $c_0$ is always ranked as third, it is clear that she cannot get any further vote. Then, in order to increase the margin of victory of $c_0$, we have that $c_2$ must obtain some of the $c_1$'s votes. The optimal solution $(S^*,I^*)$ is that, while $c_0$ keeps 7 votes, $c_1$ and $c_2$ collect 6 votes each, providing $\Expec_H \left[\MoV(S^*,I^*, H)\right] = 1$. This can be obtained by forcing $I(v)=(\cdot,\cdot,+)$ for a single $v \in A$ and $I(w)=(\cdot,+,\cdot)$ for a single $w \in B$.

However, this solution cannot be found by any algorithm adopting the greedy approach described above. Indeed, we next show that $\mathcal{F}(\emptyset,())$ is empty, and thus the algorithm never adds any seed in $S$: clearly, $\mathcal{F}(\emptyset,())$ cannot contain any pair $(s,I(s))$ that increases the number of votes of $c_0$; moreover,
by seeding a node in the seven-node ring the margin of victory clearly cannot increase (it either remains unchanged, or it decreases if $c_0$ ceases to be the best ranked candidate); similarly, by seeding a node in the four-node ring or in $A$, either the margin of victory goes down (if $c_2$ passes $c_1$) or remains unchanged; finally, by seeding one of the remaining nodes either the margin of victory goes down (if $c_1$ passes $c_2$) or remains unchanged.

Hence, the greedy solution results in a zero margin of victory, and thus it cannot be a $\rho$-approximation.
\end{proof}

\begin{proposition}
\label{prop:example2}
Be given the set of instances in which $(\phi, |C|)$ is not least-candidate manipulable.  For any $\rho > 0$ even depending logarithmically in the size of the problem, no algorithm following the greedy approach returns a $\rho$-approximation to the election problem,  even in directed trees.
\end{proposition}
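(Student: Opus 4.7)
The plan mirrors the strategy of Proposition~\ref{prop:example1}, but amplifies its gap by replicating the core trap many times inside a single directed tree, so that the optimum margin of victory grows with the size of the instance while any greedy run returns zero.

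First, I would design a small directed-tree \emph{gadget} with the following property: there is a pair of potential seeds $s_1, s_2$ together with messages $I(s_1), I(s_2)$ such that activating both simultaneously increases the expected margin of victory by one, while every single seeding $(s, I(s))$ in the gadget either leaves the expected margin of victory unchanged or decreases it, and never increases the expected cardinality of $V^*_{c_0}$. To enforce the latter I would exploit the hypothesis that $(\phi,|C|)$ is not least-candidate manipulable by placing $c_0$ as the least-preferred candidate for every susceptible voter in the gadget, so that no single message can push $c_0$ into anybody's top position. The seeds $s_1, s_2$ are placed at the roots of two disjoint directed subtrees whose downstream voters' rankings are chosen, as in Proposition~\ref{prop:example1}, so that a positive message for $c_2$ broadcast by $s_2$ alone flips too many voters from $c_1$ to $c_2$ and overshoots, hurting MoV; only a simultaneous positive message for $c_1$ broadcast by $s_1$ (and reaching a different subset of voters) restores the balance and produces a one-vote net improvement in $c_0$'s margin.

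Second, I would take $k$ such gadgets and combine them into a single directed tree by attaching their roots as children of a common root $r$, with the root-to-gadget edges given sufficiently small activation probability (or $r$ taken as a non-voter whose reachable voters are precisely the gadget roots) so that seeding $r$ does not belong to $\mathcal{F}(\emptyset, ())$. With budget $2k$, the optimum activates both designated seeds in every gadget, giving $\Expec_H[\MoV(S^*, I^*, H)] \geq k$. On the other hand, $\mathcal{F}(\emptyset, ())$ is empty: any single seeding lies in some gadget (or at $r$) and, by construction, neither raises $\Expec_H[\MoV]$ nor the expected cardinality of $V^*_{c_0}$. The greedy algorithm therefore halts at $S = \emptyset$ with expected MoV equal to zero. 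Since the tree has $\Theta(k)$ voters, the ratio between optimum and greedy is infinite, which defeats any positive $\rho$, and in particular any $\rho$ depending logarithmically on the instance size.

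The main obstacle is the explicit construction of the single-gadget directed tree: in Proposition~\ref{prop:example1}, the undirected rings ensured that one seed reached every voter in the ring at once, which made the ``paired messages'' trap transparent. Here each seed reaches only its directed descendants, so the two seeds' descendant sets must be engineered to produce the required asymmetric overshoot under single seeding and the right cancellation under joint seeding, all while keeping the underlying digraph acyclic. A secondary subtlety is choosing how to glue the $k$ gadgets together under a common root without inadvertently creating a single-seed placement at $r$ that lands in $\mathcal{F}(\emptyset, ())$; this can be resolved either by giving the root-to-gadget edges a sufficiently small activation probability or by routing them only through nodes whose downstream effect on MoV is non-positive for every message.
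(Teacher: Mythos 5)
There is a genuine gap: your argument is a plan whose central object, the single-gadget directed tree, is never constructed, and you yourself flag this as ``the main obstacle.'' The property you require of that gadget is very strong --- \emph{every} single seed--message pair $(s,I(s))$ must fail to increase both $\Expec_H[\MoV]$ and $\Expec_H[|V^*_{c_0}|]$, so that $\mathcal{F}(\emptyset,())=\emptyset$ and greedy returns margin $0$ --- and it is far from clear that it can be met on a directed tree. The mechanism that made this work in Proposition~\ref{prop:example1} was that any seed in an undirected ring activates the \emph{whole} ring, forcing an overshoot; on a directed tree a seed placed at a leaf affects only that one voter, and flipping a single voter away from a uniquely top-ranked rival candidate typically \emph{does} raise the margin of victory by one, which puts that move into $\mathcal{F}(\emptyset,())$. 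Avoiding this (e.g.\ by keeping two rivals tied at the maximum) while still allowing a two-seed improvement requires a delicate case analysis over all possible message vectors at all possible nodes, which you have not carried out.

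The paper's proof takes a genuinely different and weaker route precisely because of this difficulty: it does \emph{not} make greedy stall at the empty set. In its construction (budget $B=2$, a tree with subtrees of $7r$, $7r$ and $5r$ voters for $c_0$, $c_1$, $c_2$), greedy \emph{can} find improving moves, but the only ones available are to seed individual leaves and flip single votes from $c_1$ to $c_2$, each worth $+1$; with budget $2$ greedy therefore ends with margin $2$, while seeding the two subtree roots $x,y$ with $(\cdot,\cdot,+)$ and $(\cdot,+,\cdot)$ yields margin $r$. This gives a ratio $2/r$ on an instance of size $\Theta(r)$, which is exactly why the statement is restricted to $\rho$ depending at most logarithmically on the size (in contrast to Proposition~\ref{prop:example1}). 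Your approach, if the gadget existed, would prove the stronger ``any $\rho>0$'' version; the fact that the authors settle for the logarithmic restriction is itself evidence that the stall-at-empty-set gadget is not available on directed trees. To repair your proof you would need to either exhibit the gadget explicitly and verify emptiness of $\mathcal{F}(\emptyset,())$ against all seed--message pairs, or switch to bounding the total gain of the improving moves greedy is permitted to take, as the paper does.
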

\begin{proof}
Let $r > \frac{2}{\rho}$, and consider the graph given in Figure~\ref{fig:example2}.
 According to the ranking preferences of the nodes, the candidate $c_2$ collects $5r$ votes, while candidates $c_1$ and $c_0$ gather $7r$ votes each. Then, the actual margin of victory is equal to zero. As in the previous case, suppose  the budget $B$ be 2.

\begin{figure}[!htb]
	\centering
	\includegraphics[width=1.1\columnwidth]{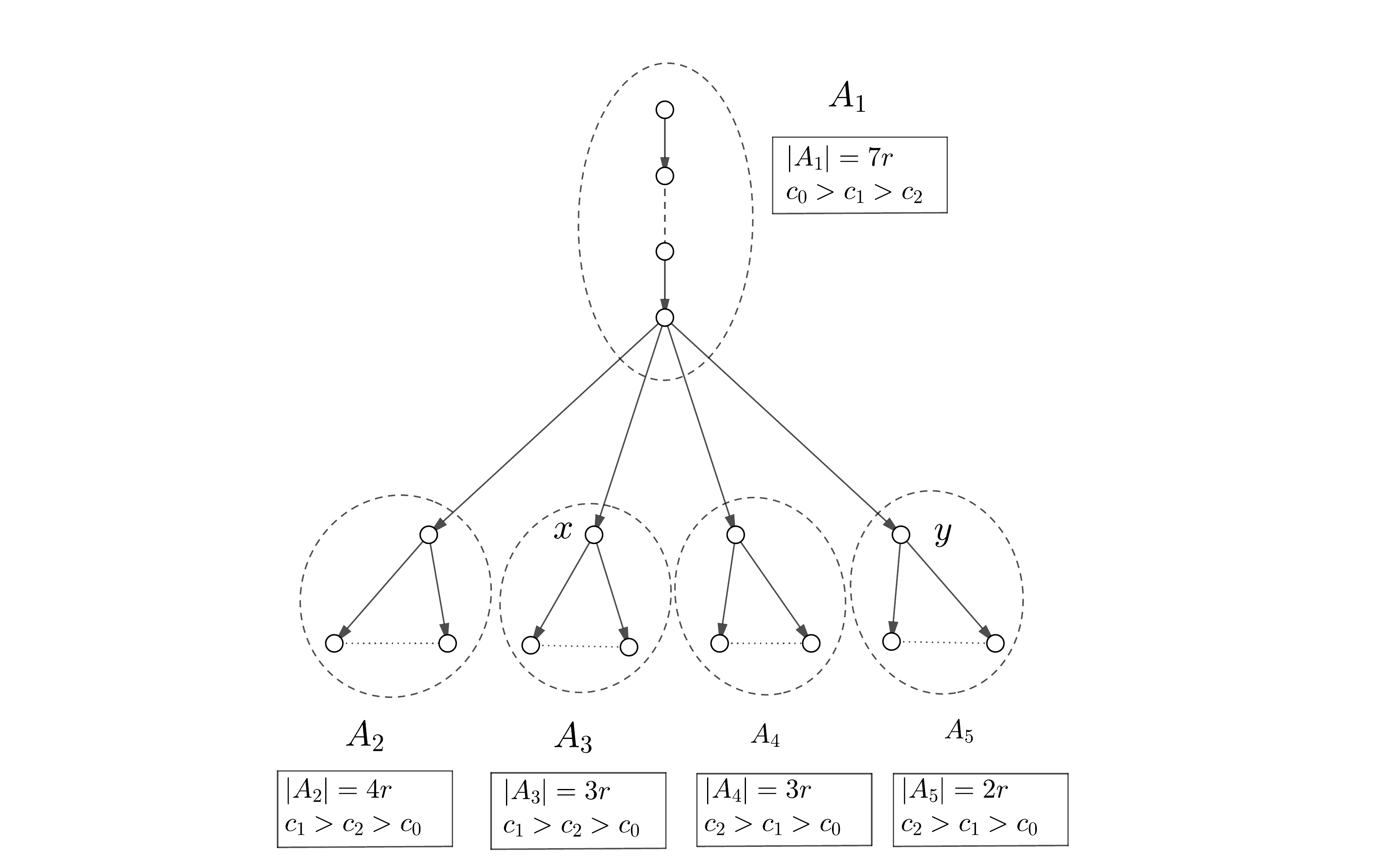}
	\caption{example of a tree in which the greedy algorithm performs badly.}
	\label{fig:example2}
\end{figure}

As above, we have that $c_0$ cannot raise any more vote, and we need that $c_2$ takes some of the $c_1$ votes. The optimal solution $(S^*,I^*)$ is then obtained forcing that $S^*=\{x, y\}$ with $I(x)=(\cdot,\cdot,+)$ and $I(y)=(\cdot,+,\cdot)$, and the expected margin of victory is $r$.

However, this solution cannot be found by any algorithm adopting the greedy approach. Indeed, as above, we have that no pair $(s,I(s))$ can increase the number of votes of $c_0$; moreover, by seeding a node in $A_1$ the margin of victory clearly cannot increase (it either remains unchanged, or it decreases if $c_0$ ceases to be the best ranked candidate); by seeding a node in $A_4$ or in $A_5$, either the margin of victory goes down (if $c_1$ passes $c_1$) or remains unchanged; finally, by seeding the root of $A_1$ or the root of $A_2$, either the margin of victory goes down (if $c_2$ passes $c_1$) or remains unchanged. Hence, the only action that the greedy algorithm can take would be to select as seed either a leaf of $A_1$, or a leaf of $A_2$ and letting them to change its vote from $c_1$ to $c_2$. By repeating the argument, we have that the two seeds selected by a greedy algorithm, must be two leafs from $A_1 \cup A_2$. So, the expected margin of victory is $2$, resulting in an approximation factor of $\frac{2}{r} < \rho$.
\end{proof}

\begin{remark}
 Another less natural class of greedy algorithms would dictate to choose the next seed so that the second most voted candidate loses some votes (even if this does not increase the margin of victory). It is not hard to see that the example deployed in Proposition~\ref{prop:example2} still proves that even this class of algorithms fails even for very simple graphs.
\end{remark}

We recall that greedy algorithms are essentially the only known algorithms guaranteeing bounded approximations for many problems related to the election control problem, such as the well-known influence maximization problem. Hence, even if an algorithm exists enabling the manipulator to control the election on many instances, Proposition~\ref{prop:example1} and Proposition~\ref{prop:example2} show that new approaches are necessary to design it.

\subsection{Approximation Results}
We next show that the condition behind the inapproximability results is tight. Indeed, by dropping that condition, we can design poly-time constant-approximation algorithms. Moreover, these algorithms turn out to follow the greedy approach that we proved to fail even for simple structure whenever the least-preferred candidate cannot be made to win.

Before presenting our result, let us introduce the following definition:
given a live graph $H$ and a choice of seeds $S$, the \emph{influence} $\chi(S, H)$ is defined to be the number of voters that are activated by these seeds: note that these are exactly the nodes that are reachable in the live-graph $H$ by at least one seed in $S$. In what follows, we  assume, for sake of presentation, that the expected influence $\Expec_H\left[\chi(S, H)\right]$ can be computed in poly-time. However, if this is not the case, we can still use a Monte Carlo simulation to approximate the expected influence within a factor $\gamma$, for every $\gamma > 0$. It turns out that, using such an approximation in place of the correct value for $\Expec_H\left[\chi(S, H)\right]$ will alter the approximation ratio of the proposed algorithms only for an additive factor $\varepsilon = \varepsilon(\gamma)$ as discussed by~\citeauthor{kempe2003maximizing}~[\citeyear{kempe2003maximizing}].
\begin{theorem}
\label{thm:approx}
Be given the set of instances in which $(\phi, |C|)$ is least-candidate manipulable.  Call $\tau \leq \ell +1$ the cardinality of the smallest set of messages making $c_0$ be the most preferred for every initial preference ranking. There is a greedy poly-time  algorithm returning a $\rho$-approximation to the election control problem, with
 $$
  \rho = \frac{B - \tau + 1}{2\tau B}\left(1-\frac{1}{e}\right).
 $$
\end{theorem}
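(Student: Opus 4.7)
My plan is to leverage the least-candidate manipulability condition to reduce the problem to a restricted influence-maximization problem, run the standard Kempe--Kleinberg--Tardos greedy, and bridge its guarantee with the margin-of-victory objective, losing only a factor of~$2$ and a factor of~$k/B$ in the process. By the two observations stated immediately before Definition~1, least-candidate manipulability yields a fixed message set $I^\tau$ of cardinality~$\tau$ that, when delivered in full to a voter, makes $c_0$ her most preferred candidate irrespective of her initial ranking. The algorithm I would analyse is: set $k=\lfloor B/\tau\rfloor$; for $k$ rounds, greedily add the seed $s\notin S$ maximising $g(S\cup\{s\})$, where
\[
 g(T)=\Expec_H\!\left[\,|\chi(T,H)\setminus V_{c_0}|\,\right]
\]
and $\chi(T,H)$ is the set of voters reachable from $T$ in the live graph $H$; the output is $(S,I)$ with $I(s)=I^\tau$ for every $s\in S$. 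The total message cost $k\tau\le B$ is within budget, and any seed that strictly increases $g$ also strictly increases the expected number of $c_0$-voters, so the algorithm is a legitimate instance of the greedy-approach template of Section~3.

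The analysis rests on four inequalities. (i) Submodularity and greedy: $g$ is a sum, over $v\in V\setminus V_{c_0}$ and over live graphs $H$, of reachability indicators from $S$, hence monotone submodular with $g(\emptyset)=0$; the standard analysis gives $g(S)\ge (1-1/e)\max_{|T|\le k} g(T)$. (ii) Lower bound on our MoV: for every live graph $H$, each $v\in\chi(S,H)$ receives all of $I^\tau$ and thus switches to $c_0$, while no non-$c_0$ candidate can gain a voter; so $\Delta_{\MoV}(S,I,H)\ge|\chi(S,H)\setminus V_{c_0}|$ pointwise, hence $\Expec_H[\Delta_{\MoV}(S,I,H)]\ge g(S)$. (iii) Upper bound on OPT's MoV: only voters in $\chi(S^*,H)$ can change their top choice, and each belongs to at most one $V_c$ with $c\ne c_0$; so separately $|V^*_{c_0}|-|V_{c_0}|\le|\chi(S^*,H)\setminus V_{c_0}|$ and $\max_{c\ne c_0}|V_c|-\max_{c\ne c_0}|V^*_c|\le|\chi(S^*,H)\setminus V_{c_0}|$, which combine to $\Expec_H[\Delta_{\MoV}(S^*,I^*,H)]\le 2\,g(S^*)$. (iv) Budget scaling: for monotone submodular $g$ and $k\le B$, any optimal size-$B$ set contains a size-$k$ subset whose $g$-value is at least $(k/B)$ times the optimum; this is the standard concavity-of-random-subset argument, as marginals $\Expec_\sigma[g(T_i)-g(T_{i-1})]$ over a random permutation $\sigma$ are non-increasing in $i$ by submodularity. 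Since $|S^*|\le B$, this gives $\max_{|T|\le k}g(T)\ge (k/B)\,g(S^*)$.

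Chaining (i)--(iv) and using $k\ge (B-\tau+1)/\tau$ yields
\[
 \Expec_H[\Delta_{\MoV}(S,I,H)] \;\ge\; \tfrac{(1-1/e)(B-\tau+1)}{2\tau B}\,\Expec_H[\Delta_{\MoV}(S^*,I^*,H)],
\]
which is the claimed ratio~$\rho$. The main obstacle is step~(iii), the factor-$2$ MoV-vs-influence bridge for the optimum: one must simultaneously bound $c_0$'s gains and the leading adversary's losses without double-counting, and it is exactly this pairing that forces the factor of $2$ rather than $1$. A secondary delicate point is step~(iv), where one implicitly commits to using the $\tau$-message bundle $I^\tau$ at each seed and hence to at most $k$ seeds while OPT may distribute its $B$ messages freely; it is the submodular budget-scaling lemma that absorbs this restriction at the cost of the $k/B$ factor.
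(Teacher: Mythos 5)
Your proposal is correct and follows essentially the same route as the paper's proof: the same algorithm ($k=\lfloor B/\tau\rfloor$ greedy influence-maximization seeds each broadcasting the $\tau$-message bundle), the same factor-$2$ split of $\Delta_{\MoV}$ into $c_0$'s gain and the runner-up's loss each bounded by reachability, the same $(1-1/e)$ submodular-greedy guarantee, and the same $k/B$ budget-scaling step. Your only deviation is optimizing $g(T)=\Expec_H[|\chi(T,H)\setminus V_{c_0}|]$ rather than the raw influence $\Expec_H[\chi(T,H)]$, a minor refinement that makes the final bridge from influence back to $\Delta_{\MoV}$ slightly cleaner than the paper's last inequality.
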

\begin{proof}
Let $I^*$ be the set of $\tau$ messages that cause a voter to vote for $c_0$, whatever was the ranking before the reception of these messages. 
Notice that since the instance is least-candidate manipulable at least the set of messages $\{(c_0,+),(c_i,-) \textnormal{ for every $i >0$ } \}$ satisfies this property.
Our algorithm selects $\left\lfloor\frac{B}{\tau}\right\rfloor$ seeds through the greedy algorithm to maximize $\Expec_H[\chi(S,H)]$ (i.e., take at each time the seed that most increases this quantity), and let each of them to send all messages in $I^*$.
It directly follows that this algorithm runs in poly-time in greedy fashion.

In order to formally prove the approximation factor of this algorithm for the election control problem, let us denote
with $\hat{S}$ the set of seeds returned by our algorithm,
with $S^*$ the set of seeds maximizing $\Expec_H[\Delta_\MoV(S,I^*,H)]$,
with $S'$ the set of seeds of size $B$ that maximizes $\Expec_H[\chi(S,H)]$ and
with $S''$ the set of seeds of size $k = \left\lfloor\frac{B}{\tau}\right\rfloor$ that maximizes $\Expec_H[\chi(S,H)]$.

It is known that the function $\Expec_H[\chi(S, H)]$ is monotonic and submodular on $S$ \cite{kempe2003maximizing}, i.e., $\Expec_H[\chi(S,H)] \leq \Expec_H[\chi(T,H)]$ and $\Expec_H[\chi(S \cup \{x\}, H)] - \Expec_H[\chi(S, H)] \geq \Expec_H[\chi(T \cup \{x\}, H)] - \Expec_H[\chi(T, H)]$ for every $S \subseteq T$ and every $x \notin T$. Consequently, the greedy algorithm is known to return, for every $k$,
a set of $k$ seeds whose influence is an $\left(1 - \frac{1}{e}\right)$-approximation of the maximum expected influence achievable with $k$ seeds \cite{kempe2003maximizing}. Hence, we have that:
\begin{equation}
 \label{eq:proof1}
 \Expec_H[\chi(\hat{S},H)] \geq \left(1 - \frac{1}{e}\right) \Expec_H[\chi(S'',H)].
\end{equation}

Note that $|V_{c}|- \Expec_H\left[\left|V^*_c(S^*,I^*,H)\right|\right] \le  \Expec_H\left[\chi(S^*,H)\right]$ for every $c \neq c_0$, since at most one vote can be lost by $c$ for every influenced node in graph $H$.
Then we have
\begin{equation}
\label{eq:proof2}
\begin{aligned}
 & \max_{c \neq c_0} |V_{c}|- \Expec_H\left[\max_{c \neq c_0} \left|V^*_c(S^*,I^*,H)\right|\right]\\
 & \qquad \le \max_{c \neq c_0} \Big\{|V_{c}|- \Expec_H\left[\left|V^*_c(S^*,I^*,H)\right|\right] \Big\}\\
 & \qquad \le \Expec_H\left[\chi(S^*,H)\right].
\end{aligned}
\end{equation}

A similar argument proves that
\begin{equation}
 \label{eq:proof3}
 \Expec_H\left[\left|V^*_{c_0}(S^*,I^*,H)\right|\right] - |V_{c_0}| \le \Expec_H\left[\chi(S^*,H)\right].
\end{equation}

Moreover, $\frac{\Expec_H\left[\chi(S',H)\right]}{|S'|} \leq \frac{\Expec_H\left[\chi(S',H)\right]}{|S''|}$, by submodularity of $\chi$.
Since $|S'| = B$ and $|S''| = \left\lfloor\frac{B}{\tau}\right\rfloor \geq \frac{B-\tau+1}{\tau}$, we then achieve that
\begin{equation}
\label{eq:proof4}
 \Expec_H\left[\chi(S',H)\right] \leq \frac{\tau B}{B-\tau+1}\Expec_H\left[\chi(S'',H)\right].
\end{equation}

Moreover, by definition of $\Delta_\MoV$,
$\Expec_H[\Delta_\MoV(S^*,I^*,H)] = (\Expec_H\left[\left|V^*_{c_0}(S^*,I^*,H)\right|\right] - \Expec_H\left[\max_{c \neq c_0} \left|V^*_c(S^*,I^*,H)\right|\right]) - \left(|V_{c_0}| - \max_{c \neq c_0} |V_{c}| \right)$. Hence, we directly achieve that
$\Expec_H[\Delta_\MoV(S^*,I^*,H)] = (\Expec_H\left[\left|V^*_{c_0}(S^*,I^*,H)\right|\right] - |V_{c_0}|) + (\Expec_H[\max_{c \neq c_0} |V^*_c(S^*,I^*,H)|] - \max_{c \neq c_0} |V_{c}|)$.
Putting all together, we then have that
\begin{align*}
 & \Expec_H[\Delta_\MoV(S^*,I^*,H)]\\
 & \qquad \leq 2\Expec_H\left[\chi(S^*,H)\right]  \hspace{3cm} \text{(by \eqref{eq:proof2} and \eqref{eq:proof3})}\\
 & \qquad \leq 2\Expec_H\left[\chi(S',H)\right] \hspace{2.3cm} \text{(by definition of $S'$)}\\
 & \qquad \leq \frac{2\tau B}{B-\tau+1}\Expec_H\left[\chi(S'',H)\right] \hspace{2.54cm} \text{(by \eqref{eq:proof4})}\\
 & \qquad \leq \frac{2\tau B}{B-\tau+1} \left(1-\frac{1}{e}\right)^{-1}\Expec_H\left[\chi(\hat{S},H)\right] \hspace{0.7cm} \text{(by \eqref{eq:proof1})}\\
 & \qquad \leq \frac{2\tau B}{B-\tau+1} \left(1-\frac{1}{e}\right)^{-1}\Expec_H\left[\Delta_\MoV(\hat{S},I^*,H)\right],
\end{align*}
where the last inequality follows from the fact that, by definition of $I^*$, all influenced nodes will vote for $c_0$.
\end{proof}

\subsection{Special Cases}
We next show some specific results for the special ranking revision functions discussed in Section~\ref{sec:model}. 
Since pessimistic ranking revision function is not least-candidate manipulable for three candidates, then Theorem~\ref{thm:inapprox} applies, and thus we have the following.
\begin{proposition}
\label{prop:pessimistic3}
 For every $\rho >0$ even depending on the size of the problem, unless $\mathsf{P} = \mathsf{NP}$ there is no poly-time algorithm returning a $\rho$-approximation to the election problem with pessimistic ranking revision function, for only three candidates.
\end{proposition}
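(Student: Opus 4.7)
The plan is to derive the claim directly from Theorem~\ref{thm:inapprox}, so the entire proof reduces to checking that the pair $(\phi,|C|)$ consisting of the pessimistic ranking revision function and three candidates fails to be least-candidate manipulable in the sense of the definition preceding that theorem. No new reduction is needed; the whole argument is a definitional verification followed by the invocation of the inapproximability theorem.

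First, I would fix an arbitrary voter $v$ whose initial ranking $\pi_v$ places $c_0$ in the last (third) position, so $\pi_v$ is either $c_1 \succ c_2 \succ c_0$ or $c_2 \succ c_1 \succ c_0$. The distinguished message set in the definition of least-candidate manipulability with three candidates is $R=\{(c_0,+),(c_1,-),(c_2,-)\}$, and I need to verify that the most-preferred candidate in $\phi(\pi_v,R)$ is not $c_0$. This follows immediately from the defining property of the pessimistic ranking revision function recalled in Section~\ref{sec:model}: a candidate that is not in the top two positions of the initial ranking can never become the most-preferred candidate after revision, regardless of which messages are received. Since $c_0$ sits in position three of $\pi_v$, this structural constraint rules out $c_0$ reaching the top under $\phi(\pi_v,R)$, and consequently $(\phi,3)$ is not least-candidate manipulable. (A sanity check against Table~\ref{table:ties} confirms that, for the initial ranking used there, the pessimistic column never returns the least-preferred candidate $c_2$; the situation we care about is symmetric to that one.)

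With this observation in hand, the set of instances of interest in Proposition~\ref{prop:pessimistic3} is a subset of the instances covered by the hypothesis of Theorem~\ref{thm:inapprox}, and the claimed inapproximability for any $\rho>0$ (even one depending on the size of the problem), unless $\mathsf{P}=\mathsf{NP}$, follows verbatim from that theorem. I do not expect any real obstacle here: the only subtle point is to read the pessimistic rule as a universal, message-independent restriction on which candidates can reach the top position, rather than as a property that happens to hold only for certain combinations of messages; once that reading is made explicit, the proposition is a one-line corollary of Theorem~\ref{thm:inapprox}.
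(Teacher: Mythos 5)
Your proposal is correct and follows exactly the paper's own argument: the paper likewise observes that the pessimistic ranking revision function, by its defining property that a candidate outside the top two positions can never become most preferred, makes $(\phi,3)$ fail least-candidate manipulability, and then cites Theorem~\ref{thm:inapprox}. Your write-up simply makes this definitional check more explicit than the paper does.
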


By focusing on optimistic ranking revision function, we have that two negative messages about $c_1$ and $c_2$, i.e. $(\cdot,-,-)$, are sufficient to make $c_0$ first for every initial ranking.
Hence, we can apply Theorem~\ref{thm:approx} with $\tau = 2$, and we have the following.
\begin{proposition}
 \label{prop:optimistic3}
 There is a greedy poly-time algorithm that returns a $\rho$-approximation to the election control problem with optimistic ranking revision function for three candidates, with
 $$
  \rho = \frac{B - 1}{4B}\left(1-\frac{1}{e}\right) \approx \frac{1}{4} \left(1-\frac{1}{e}\right).
 $$
 The result holds even if only negative influences are allowed.
\end{proposition}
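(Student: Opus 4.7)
The plan is to derive the claim as an application of Theorem~\ref{thm:approx} once we verify two facts about the optimistic revision function with $|C|=3$: (i) the pair $(\phi,3)$ is least-candidate manipulable, and (ii) the quantity $\tau$ of the theorem equals $2$, attained by the purely-negative message set $R^* = \{(c_1,-),(c_2,-)\}$ (i.e., the vector $(\cdot,-,-)$).

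For fact (ii), I would enumerate the six possible initial rankings over $\{c_0,c_1,c_2\}$ and, for each of them, list the at most two possible rankings that arise from applying the two messages of $R^*$ in the two possible orderings using Wilder et al.'s single-message swap rule (promote $c$ upward for $(c,+)$, push $c$ downward for $(c,-)$). When $c_0$ is already ranked first, $R^*$ cannot demote it, so $c_0$ remains first. When $c_0$ is in the middle, applying $(c_{i},-)$ for the $c_i$ currently above $c_0$ promotes $c_0$; applying the other $(c_j,-)$ cannot move $c_0$ down since $c_0$ is not affected by negative messages on other candidates above it—one can verify each case yields $c_0$ as first under both orderings. When $c_0$ is last, the two orderings give different candidates as first, and one of them is always $c_0$; the optimistic tie-breaking rule, which favors the candidate ranked worst in $\pi_v$, selects $c_0$ since $c_0$ was the least-preferred. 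A parallel and shorter argument shows that no single message can place $c_0$ first from a ranking in which $c_0$ was last, so $\tau \geq 2$ and hence $\tau = 2$.

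For fact (i), by the definition of least-candidate manipulability we need to check that when $c_0$ is the least preferred, the extended message set $(+,-,-)$ still makes $c_0$ the most preferred under optimism. I would reuse the enumeration from the previous step: the set of possible first candidates produced by the $3!$ orderings of the three messages is a subset of $\{c_0,c_1,c_2\}$ and always contains $c_0$ (since one legal ordering first demotes the two non-$c_0$ candidates by $R^*$ and then leaves $c_0$ on top, or alternatively one can first promote $c_0$ via $(c_0,+)$). Because $c_0$ was the least-preferred, the optimistic tie-breaking rule again selects $c_0$.

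With these two facts in hand, Theorem~\ref{thm:approx} applied to $I^* = R^*$ with $\tau=2$ immediately yields a greedy poly-time $\rho$-approximation with
\[
\rho = \frac{B-\tau+1}{2\tau B}\left(1-\frac{1}{e}\right) = \frac{B-1}{4B}\left(1-\frac{1}{e}\right),
\]
and since $R^*$ contains only negative messages, the algorithm and its guarantee remain valid under the restriction that only negative influence is allowed. The only non-routine part of the argument is the tie-breaking bookkeeping in the case analysis for $c_0$ initially last, but it is a short finite enumeration and I expect no conceptual difficulty.
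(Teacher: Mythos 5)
Your proposal is correct and follows essentially the same route as the paper: the paper also establishes the result by observing that the all-negative set $(\cdot,-,-)$ makes $c_0$ first for every initial ranking under the optimistic rule (whence least-candidate manipulability and $\tau=2$) and then invoking Theorem~\ref{thm:approx}. Your explicit enumeration of the orderings and the tie-breaking check for the case where $c_0$ is initially last simply fills in details the paper leaves implicit.
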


Differently from the optimistic ranking revision function, for the score-based one, $(\cdot,-,-)$ is not sufficient for make $c_0$ first when she is last in the initial ranking. For this reason, to ensure that $c_0$ reaches the first position, three messages are needed, namely $(+,-,-)$. Hence, from Theorem~\ref{thm:approx} we achieve the following.
\begin{proposition}
	\label{prop:score3}
	There is a greedy poly-time algorithm that returns a $\rho$-approximation to the election control problem with score-based ranking revision function for three candidates, with
	$$
	\rho = \frac{B - 2}{6B}\left(1-\frac{1}{e}\right) \approx \frac{1}{6} \left(1-\frac{1}{e}\right).
	$$
\end{proposition}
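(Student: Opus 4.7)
The plan is to apply Theorem~\ref{thm:approx} with $\tau = 3$, which requires verifying two things about the score-based ranking revision function for $|C|=3$: (a) the three-message set $R^\star = \{(c_0,+),(c_1,-),(c_2,-)\}$ forces $c_0$ to the top from every initial ranking, and (b) no two-message set does so.

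For (a) I would compute the updated scores directly. With three candidates the initial scores are $(s_0,s_1,s_2)$, a permutation of $(1,2,3)$. Applying $R^\star$ shifts them to $(s_0+1+\varepsilon,\ s_1-1-\varepsilon,\ s_2-1-\varepsilon)$. Thus $c_0$'s updated score is at least $2+\varepsilon$ (worst case $s_0=1$), while both $c_1$ and $c_2$ end with at most $2-\varepsilon$ (worst case $s_i=3$). So $c_0$ is strictly first. In particular this holds when $c_0$ is initially last, which is exactly the least-candidate manipulability condition, so Theorem~\ref{thm:approx} applies and $\tau \leq 3$.

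For (b) I would enumerate the twelve two-message patterns $(q_0,q_1,q_2)$, grouping them by the value of $q_0$. If $q_0 = \cdot$, pick an initial ranking with $c_0$ last; then $c_0$'s final score remains $1$, while the candidate initially first has final score at least $3-(1+\varepsilon) = 2-\varepsilon > 1$. If $q_0 = -$, let $c_j$ be the unique untouched candidate in $\{c_1,c_2\}$ and pick the initial ranking $c_j \succ c_0 \succ c_i$; then $c_j$'s score stays at $3$, while $c_0$'s drops to $1-\varepsilon$, so $c_j$ beats $c_0$. If $q_0 = +$, again let $c_j$ be the untouched one and pick the initial ranking $c_j \succ c_i \succ c_0$; then $c_0$'s final score is $2+\varepsilon < 3 = c_j$'s score. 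In all twelve cases $c_0$ fails to be top-ranked, so $\tau \geq 3$.

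Substituting $\tau = 3$ into the bound of Theorem~\ref{thm:approx} yields
\[\rho \;=\; \frac{B-3+1}{2\cdot 3 \cdot B}\left(1-\frac{1}{e}\right) \;=\; \frac{B-2}{6B}\left(1-\frac{1}{e}\right),\]
as claimed. The main obstacle is the lower-bound case analysis in step (b); each individual case is a two-line score computation, but one must pick the correct defeating initial ranking for each pattern and take care that the enumeration is complete.
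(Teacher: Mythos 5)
Your proof is correct and follows the same route as the paper: both apply Theorem~\ref{thm:approx} after establishing $\tau=3$ for the score-based rule with three candidates. The paper merely asserts that $(+,-,-)$ suffices and that $(\cdot,-,-)$ does not; your explicit score computations (including the exhaustive two-message case analysis, to which one would only add the trivial check that single messages also fail) fill in exactly what the paper leaves implicit, and in any case only the upper bound $\tau\le 3$ is needed for the stated value of $\rho$.
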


 Moreover, differently from the optimistic ranking revision function, for the score-based one the approximation guarantee ceases to hold if only negative influences are allowed. Indeed, in this case there is no way for the last ranked candidate to become the first one, and thus Theorem~\ref{thm:inapprox} holds.
\begin{proposition}
	\label{prop:score3neg}
	For every $\rho >0$ even depending on the size of the problem, unless $\mathsf{P} = \mathsf{NP}$, then, even for only three candidates, there is no poly-time algorithm returning a $\rho$-approximation to the election problem with score-based ranking revision function if only negative influences are allowed.
\end{proposition}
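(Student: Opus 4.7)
The plan is to invoke the argument of Theorem~\ref{thm:inapprox} with minimal modifications, after first verifying that, in the negative-only three-candidate setting with the score-based rule, $c_0$ cannot be promoted from her last position to first by any subset of the available messages. This is the analogue, in the restricted model, of the non-least-candidate-manipulability condition. The verification is short: if a voter ranks $c_0$ last, then the two non-$c_0$ candidates start with scores $3$ and $2$ while $c_0$ starts with score $1$; the only available negative messages are $(c_0,-),(c_1,-),(c_2,-)$, and none of them raises $c_0$'s score, whereas the top non-$c_0$ score after applying any subset of them is at least $3-(1+\varepsilon)=2-\varepsilon>1$ for sufficiently small $\varepsilon$. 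Hence $c_0$ remains strictly behind.

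The second step is to reuse the set-cover reduction of Theorem~\ref{thm:inapprox} specialised to $\ell=2$, keeping the four components $G_1,G_2,G_3,G_4$ and the same voter rankings, but swapping the two positive messages used in the ``if'' direction for corresponding negative ones: each seed $v_{x_i}\in G_1$ now sends $(c_2,-)$ in place of $(c_1,+)$, and the single seed chosen in $G_3$ now sends $(c_1,-)$ in place of $(c_2,+)$. A short score computation confirms that the intended vote flips are preserved: a voter in $G_1$ with ranking $c_2\succ c_1\succ c_0$ receiving $(c_2,-)$ ends with scores $2-\varepsilon,\,2,\,1$, hence votes $c_1$; symmetrically, a voter in $G_3$ with ranking $c_1\succ c_2\succ c_0$ receiving $(c_1,-)$ ends with scores $2-\varepsilon,\,2,\,1$, hence votes $c_2$. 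So a set cover of size $h$ still yields a feasible solution with $\MoV=1$ under a budget of $h+1$ negative messages.

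For the ``only if'' direction the case analysis of Theorem~\ref{thm:inapprox} transfers almost verbatim: non-promotability of $c_0$ just established implies $|V^*_{c_0}|$ is unchanged by any feasible solution; the clique structure of $G_2$ together with the second property of ranking-revision functions rules out reshuffling $c_2$-voters inside $G_2$; so any $\MoV$ improvement requires $c_2$ to take all of $G_3$'s $c_1$-votes and $c_1$ to win back exactly $n+h$ votes from $c_2$ inside $G_1$, which under budget $h+1$ is achievable only by picking $h$ seeds among $\{v_{x_i}\}$ whose sets cover $N$. The step I expect to require the most care is certifying that no combination of the newly available negative messages (most notably $(c_0,-)$) opens an alternative route to a positive $\MoV$ that does not encode a set cover; since $(c_0,-)$ can only depress the manipulator's own candidate and no positive messages are on the table, this check reduces to a brief case enumeration, after which the inapproximability conclusion follows exactly as in Theorem~\ref{thm:inapprox}.
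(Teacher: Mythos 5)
Your proposal is correct and takes essentially the same route as the paper, which simply observes that with three candidates, the score-based rule, and negative messages only, the last-ranked candidate can never be promoted to first, so the set-cover reduction of Theorem~\ref{thm:inapprox} applies. Your explicit score computations and, in particular, your replacement of the positive messages $(c_1,+)$ and $(c_2,+)$ in the ``if'' direction by the negative messages $(c_2,-)$ and $(c_1,-)$ are exactly the adaptations needed to make that reduction go through in the negative-only model; the paper leaves these details implicit.
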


We show that, whereas an approximation algorithm exists for the optimistic ranking revision function with only negative influence, this result does not carry on to the case that only positive influence is allowed. Indeed, in this case, there is no way to raise the rank of the last candidate up to the first position, and thus Theorem~\ref{thm:inapprox} applies.

\begin{proposition}
\label{prop:opt3pos}
 For every $\rho >0$ even depending on the size of the problem, unless $\mathsf{P} = \mathsf{NP}$, then, even for only three candidates and optimistic ranking revision function, there is no polynomial time algorithm returning a $\rho$-approximation to the election problem if only positive influences are allowed.
\end{proposition}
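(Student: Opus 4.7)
The plan is to derive the result as a direct corollary of Theorem~\ref{thm:inapprox}, after establishing the structural fact explicitly announced in the paragraph preceding the proposition: in the optimistic ranking revision function with three candidates, positive-only influence cannot promote $c_0$ from the least-preferred to the most-preferred position for any voter. Once this is in hand, the very same set-cover reduction used in Theorem~\ref{thm:inapprox} carries over, because the property exploited there (``$c_0$ cannot gain votes from voters that rank it last'') is precisely what the structural fact guarantees in this restricted setting.

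First I would verify the structural fact. Fix $|C|=3$ and suppose a voter has initial ranking $\pi$ with $c_0$ last, i.e.\ either $c_1 \succ c_2 \succ c_0$ or $c_2 \succ c_1 \succ c_0$. Each positive message moves its target one position upward, so $(c_0,+)$ alone lifts $c_0$ only to the middle slot, while $(c_1,+)$ and $(c_2,+)$ never raise $c_0$. For the compound sets $\{(c_0,+),(c_1,+)\}$, $\{(c_0,+),(c_2,+)\}$, $\{(c_1,+),(c_2,+)\}$, and $\{(c_0,+),(c_1,+),(c_2,+)\}$, I would enumerate the (at most $3!$) application orders and track the resulting rankings; in every case $c_0$ ends up middle or last, so under the optimistic tie-break rule (which always picks the worst-feasible initially-ranked candidate) $c_0$ never becomes the most preferred. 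This is a short and mechanical check.

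Second, I would invoke the reduction from Theorem~\ref{thm:inapprox} specialized to $\ell=2$ (so that $C = \{c_0,c_1,c_2\}$), making the minor bookkeeping adjustments required because the clauses ``for every $i \notin \{0,1,2\}$'' contribute nothing to $G_2$ when $\ell=2$, and re-balancing the sizes of $G_2$, $G_3$, and $G_4$ so that $\MoV(\emptyset,(),H)=0$ still holds. In the modified instance, voters outside $G_4$ rank $c_0$ last and voters in $G_4$ already prefer $c_0$; the structural fact above, applied in place of the ``not least-candidate manipulable'' hypothesis, implies that no positive-only solution can make $|V^*_{c_0}(S^*,I^*,H)|$ exceed $|V_{c_0}|$. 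The rest of the reduction in Theorem~\ref{thm:inapprox} depends only on transferring votes between $c_1$ and $c_2$ through the set-cover gadget via $(c_1,+)$ messages to $G_1$ and a $(c_2,+)$ message to $G_3$, which are positive messages and therefore remain available. Hence, exactly as in Theorem~\ref{thm:inapprox}, a feasible solution attains $\Delta_{\MoV}>0$ if and only if a set cover of size at most $h$ exists, and any $\rho$-approximation would allow us to distinguish these cases in polynomial time.

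The main obstacle is purely cosmetic: readjusting the node counts of the gadget components to ensure that the pre-manipulation margin vanishes when $\ell=2$. The structural fact and the logical skeleton of the reduction are straightforward; the conceptual content of the proof is just the observation that the positive-only restriction destroys least-candidate manipulability for $(\phi_{\mathrm{opt}},3)$, so Theorem~\ref{thm:inapprox} applies.
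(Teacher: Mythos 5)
Your proposal is correct and follows exactly the paper's route: the paper proves this proposition in one line by observing that positive-only messages cannot lift a last-ranked $c_0$ to the top under the optimistic rule with three candidates, and then invoking the reduction of Theorem~\ref{thm:inapprox}, whose constructed solution indeed uses only the positive messages $(c_1,+)$ and $(c_2,+)$. Your additional checks (the enumeration of positive message sets and the specialization to $\ell=2$, where the component sizes already balance without modification) are just explicit verifications of what the paper leaves implicit.
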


To conclude, we extend the above results to the case of score-based tie-breaking rule with four or more candidates. We recall that this tie-breaking rule is well-defined for every number of candidates. 
More precisely, it turns out that having more candidates makes the problem even harder. Indeed, with more than three candidates, there is no way of pushing the last ranked candidate up to the first rank, even with score-based tie-breaking rules, and hence Theorem~\ref{thm:inapprox} holds.
\begin{proposition}
\label{prop:score4neg}
 For every $\rho >0$ even depending on the size of the problem, if there are at least four candidates, then, unless $\mathsf{P} = \mathsf{NP}$, there is no poly-time algorithm returning a $\rho$-approximation to the election problem with score-based tie-breaking rule.
\end{proposition}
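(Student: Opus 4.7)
The strategy is to reduce the statement directly to Theorem~\ref{thm:inapprox} by verifying that, under the score-based ranking revision function, the pair $(\phi, |C|)$ fails to be least-candidate manipulable whenever $|C| \geq 4$. Recall from the first observation at the beginning of Section~3 that, for an arbitrary revision function, this is equivalent to checking that the specific message set $R^\star = \{(c_0,+)\} \cup \{(c_i,-) : i > 0\}$ fails to push $c_0$ to the first position for some initial ranking having $c_0$ last. So the only real work is a short score calculation.

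I would pick any ranking $\pi$ with $c_0$ in last position and let $c_j$ be the candidate occupying the top of $\pi$; under the score-based rule the initial score of $c_j$ is $|C|$ while the initial score of $c_0$ is $1$. Applying $R^\star$ changes these two scores to $|C| - 1 - \varepsilon$ and $2 + \varepsilon$, respectively, while every other candidate's score is still at most its pre-message value minus $(1+\varepsilon)$. For $|C| \geq 4$ and small $\varepsilon > 0$ (as implicit in the paper's definition, in particular $\varepsilon < 1/2$), we have $|C| - 1 - \varepsilon \geq 3 - \varepsilon > 2 + \varepsilon$, so $c_j$ still outranks $c_0$ after the revision. Hence $R^\star$ does not make $c_0$ top, and by the observation no message set whatsoever can do so; the pair is therefore not least-candidate manipulable for any $|C| \geq 4$.

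With non-manipulability established, Theorem~\ref{thm:inapprox} applies verbatim to the class of all instances having score-based revision and at least four candidates, and it delivers the claimed inapproximability under $\mathsf{P} \neq \mathsf{NP}$. I do not foresee any serious technical obstacle: the argument reduces to a one-line numerical comparison showing that the gap between the top candidate and the bottom candidate in the initial ranking is too large to close using only one positive and $|C|-1$ negative messages once $|C|$ exceeds three. This also highlights the qualitative boundary between Proposition~\ref{prop:score3} (where $|C| = 3$ yields a manipulable pair with $\tau = 3$) and the present statement: adding even a single additional candidate widens the gap by exactly one unit, enough to put the top candidate permanently out of reach.
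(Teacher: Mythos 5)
Your proposal is correct and follows the same route as the paper: the paper's justification for this proposition is precisely that with four or more candidates no message set can lift the last-ranked candidate to the top under the score-based rule, so the pair is not least-candidate manipulable and Theorem~\ref{thm:inapprox} applies. Your explicit score computation ($|C|-1-\varepsilon > 2+\varepsilon$ for $|C|\geq 4$ and small $\varepsilon$) just fills in the arithmetic the paper leaves implicit.
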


\section{Extensions}
We describe some extensions and variants of our model and  show how most of our results extend to these settings.
\subsection{Bribed Voters}
In the model described in Section~\ref{sec:model}, seeds act as initiators of positive and negative messages about candidates. However, apart from that, their behavior is exactly the same as any other node in the network. In particular, the messages that they receive will affect their ranking and, consequently, their vote.
We now consider also a variant of this scenario, in which seeds are \emph{bribed}, i.e., for each seed her preference ranking (and thus, her vote) is not affected by messages different from the one that she sends.

It directly follows that the reduction described in the proof of Theorem~\ref{thm:inapprox} does not work in this setting.
However, we next show that, even in this setting, the election control problem is essentially inapproximable.
\begin{theorem}
\label{thm:bribed_inapprox}
Be given the set of instances in which the pairs composed of tie-breaking rule and number of candidates are not least-candidate manipulable. For any $\rho > 0$  there is no poly-time algorithm returning a $\rho$-approximation to the election problem with bribed seeds, unless $\mathsf{P} = \mathsf{NP}$.
\end{theorem}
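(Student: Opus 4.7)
The plan is to adapt the reduction from \emph{set cover} used in the proof of Theorem~\ref{thm:inapprox} to the bribed-seed setting. The four-component graph $G_1, G_2, G_3, G_4$ and the preferences in $G_2$ and $G_4$ can be kept essentially intact, since the protective arguments for these components rely only on their clique structures and on the second property of the ranking revision function, both of which are unaffected by whether seeds are bribed. The critical change concerns $G_3$: in Theorem~\ref{thm:inapprox}, the single seed placed in $G_3$ switched its own vote together with all other clique members, contributing $n+h+1$ total switches from $c_1$ to $c_2$, which were then exactly balanced by the $n+h$ switches in $G_1$. In the bribed setting the seed's vote is frozen, so only $n+h$ switches occur in $G_3$ and only $n$ switches in $G_1$ (the $v_{z_i}$'s), and the original balance breaks down for $h \geq 2$.

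To restore the balance I would shrink $G_3$ to a clique of exactly $n+2$ nodes, still all voting $c_1$, and set the initial counts to $|V_{c_0}|=N_0$, $|V_{c_1}|=N_0$, $|V_{c_2}|=N_0-2$, and $|V_{c_i}|=N_0-1$ for $i\geq 3$ (with $N_0=3(m+n)+2$, redistributing the residual votes through $G_2$ as in the original construction); the budget is again $B=h+1$. Given a set cover $X^{*}\subseteq X$ of size exactly $h$, one takes as seeds the $h$ nodes $v_{x_i}$ with $x_i\in X^{*}$ sending $(c_1,+)$, plus one arbitrary node in $G_3$ sending $(c_2,+)$. The $n$ nodes $v_{z_i}$ in $G_1$ switch from $c_2$ to $c_1$ and the $n+1$ non-seed nodes of $G_3$ switch from $c_1$ to $c_2$, while all seeds remain frozen, giving $|V^{*}_{c_1}|=|V^{*}_{c_2}|=N_0-1<N_0=|V^{*}_{c_0}|$ and $\MoV=1$.

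For the "only if" direction I would recycle the structural lemmas of Theorem~\ref{thm:inapprox}. Since $(\phi,|C|)$ is not least-candidate manipulable, $c_0$ cannot overtake anyone in $G_1, G_2, G_3$, and any seeding in $G_4$ can only make $c_0$ lose votes. The clique argument using the second property of $\phi$ still rules out seeding $G_2$, because it depends only on the ranking geometry and not on whether the seed's own ranking is altered. Hence useful seeds live in $G_1\cup G_3$, and only the $v_{x_i}$'s have outgoing edges in $G_1$. If those $v_{x_i}$-seeds cover only $n'\leq n-1$ elements of $N$, then $|V^{*}_{c_2}|\geq(N_0-2)+(n+1)-n'\geq N_0$, so $\MoV\leq 0$. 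As in Theorem~\ref{thm:inapprox}, a polynomial-time $\rho$-approximation with $\rho>0$ would then distinguish $\MoV=1$ from $\MoV\leq 0$ and therefore decide set cover in polynomial time.

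The main obstacle is the case analysis hidden in the "only if" step: with self-votes frozen, the arithmetic slack is thinner than in Theorem~\ref{thm:inapprox}, so one must rule out more subtle alternative strategies, such as placing extra seeds in $G_3$ (which only reduces the non-seeds that switch), spreading the $h+1$ messages across several seeds in $G_1$, using $(c_2,-)$ or $(c_0,+)$ in place of $(c_1,+)$ (which under a ranking revision function that is not least-candidate manipulable induces the same local swaps), or mixing messages at a single seed. Each such alternative must be shown to either leave $c_2$ with $|V^{*}_{c_2}|\geq N_0$, blow up some $|V^{*}_{c_i}|$ in $G_2$, or reduce $|V^{*}_{c_0}|$ in $G_4$, so that every manipulation achieving $\MoV>0$ in fact encodes a set cover of size at most $h$.
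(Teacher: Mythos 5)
Your route is genuinely different from the paper's, and the difference is instructive. You keep the yes/no dichotomy at ``$\MoV=1$ versus $\MoV\leq 0$'' and recalibrate the component sizes so that the frozen seeds are absorbed exactly into the counting. The paper instead sidesteps the recalibration entirely with a single amplification trick: every node of the Theorem~\ref{thm:inapprox} construction is blown up into a clique of size $(h+1)\rho'$ with $\rho'$ sufficiently large relative to $1/\rho$. A seed then freezes only one member of its clique while the remaining members still switch, so the total error introduced by bribing is at most $B=h+1$ votes, which is negligible against a signal of order $(h+1)\rho'$. This turns the reduction into a gap argument (optimum at least $(h+1)\rho'$ on yes-instances versus at most $h+1$ on no-instances) that any $\rho$-approximation must separate, and it inherits the whole ``only if'' analysis of Theorem~\ref{thm:inapprox} essentially for free. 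Your approach buys a cleaner instance (no blow-up, a gap of exactly one vote) at the price of redoing the delicate counting.

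That price is where your sketch is not yet closed. Your ``only if'' inequality $|V^{*}_{c_2}|\geq(N_0-2)+(n+1)-n'$ is computed for exactly one seed in $G_3$. With $t\geq 2$ seeds placed in $G_3$, only $n+2-t$ non-seeds switch to $c_2$, and the requirement on $G_1$ weakens to covering $n+1-t$ elements of $N$ with $h+1-t$ sets---a partial cover, not the set cover you need. This case does not fall into any of the three failure modes you list (it does not force $|V^{*}_{c_2}|\geq N_0$, blow up a tally in $G_2$, or cost $c_0$ votes); it genuinely yields a feasible solution with $\MoV=1$ that a priori encodes only a partial cover. It is repairable---pad the $h+1-t$ chosen sets with one set per uncovered element (there are at most $t-1$ of them, and each element may be assumed to lie in some set) to obtain a cover of size $h$---but this step must be made explicit, and without it the ``only if'' direction is incomplete.
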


\begin{proof}[Proof Sketch]
 Consider the reduction described in the proof of Theorem~\ref{thm:inapprox}, except that now each node is enlarged into a clique of size $(h+1) \rho'$, where $\rho' > \rho$.
 Hence, if a set cover of size at most $h$ exists, then, $\Delta_{\MoV}(S^*,I^*,H) \geq (h+1) \rho'$, otherwise the only nodes that eventually change opinion are the seeds, that are at most $h+1$. Thus any $\rho$-approximate algorithm must be able to distinguish these two cases and thus solves the set cover problem in poly-time.
\end{proof}
Instead, it is easy to check that Theorem~\ref{thm:approx} is unaffected by bribed voters, and so a constant approximation is still possible when there is a set of messages able to lead the last ranked candidate to the first place.

\subsection{Other Objective Functions}
In addition to the maximization of the increase in the margin of victory, also studied by~\citeauthor{wilder2018controlling}~[\citeyear{wilder2018controlling}], alternative objective functions may be of interest.

For example, one may want to maximize the probability of victory. For this objective function, already discussed by~\citeauthor{wilder2018controlling}~[\citeyear{wilder2018controlling}], it is not trivial to see that Theorem~\ref{thm:inapprox} keeps holding.
However, notice that this objective function makes the problem even harder than maximizing the margin of victory. Indeed, for the latter objective, Theorem~\ref{thm:approx} implies that a $\frac{1}{2}\left(1 - \frac{1}{e}\right)$-approximation can be computed in poly-time when only two candidates are involved. It is instead not hard to see that, in order to maximize the probability of victory when only two candidates are equivalent, it is sufficient that all selected seeds send the same message. Hence, for two candidates, maximizing the probability of victory in our setting is the same as doing it in the setting of~\citeauthor{wilder2018controlling}~[\citeyear{wilder2018controlling}]. Hence, the problem cannot be approximate, unless $\mathsf{P}=\mathsf{NP}$, within a factor $\rho > 0$, even for two only candidates.

An apparently weaker goal would be that one of computing the set of seeds and the corresponding messages so that the probability of victory merely is above a given threshold (so the set of feasible solutions would be larger than in the setting described above). Unfortunately, this objective function does not make the problem easier to be solved. Indeed, not only Theorem~\ref{thm:inapprox} holds in this setting regardless than the threshold, but one may show that, as for the goal of maximizing the probability of victory, the inapproximability still holds when only two candidates are available.
 
\subsection{Threshold Dynamics}
The results we derived in the previous sections and based to a multi-issue independent cascade model can be extended to settings in which the diffusion model is linear threshold. This model  represents the most prominent among the diffusion models alternative to the independent cascade. In the linear threshold model, 
for each node $v$ of the network, there is a threshold $\theta_v$ drawn randomly in $[0,1]$, and incoming edges $(u,v)$ have a weight $w_{u,v}$ such that $\sum_{(u,v)} w_{u,v} = 1$.
Then, a node $v$ becomes active at time $t$ only if the sum of weights of edges coming from active nodes passes the threshold.

It is easy to check that this diffusion model leads to different dynamics with respect to the independent cascade model. Still, we show that our proofs can be adapted. In particular,  Theorem~\ref{thm:inapprox} and Theorem~\ref{thm:approx} still hold.

Specifically, for the inapproximability result, we use, in place of set cover, a reduction from vertex cover.
This is the problem of deciding whether, given a graph $Z$ of $m$ nodes and an integer $k$, there is a subset $S$ of at most $k$ nodes of $Z$ such that every edge of $Z$ has at least one endpoint in $S$.
The reduction  is similar to the one described in Theorem~\ref{thm:inapprox}. Namely, the component $G_1$ consists of the graph $G$. Now by setting $n =m-k$, we let components $G_2-G_4$ to have the same number of nodes as in the proof of Theorem~\ref{thm:inapprox}, except that now the nodes in each components are not arranged as a clique, but as a directed ring (so that a message sent by a node in one component will activate all nodes in that component regardless of their threshold). Notice that, by considering the same initial rankings as in the proof of Theorem~\ref{thm:inapprox},  the margin of victory of $c_0$ in this instance can become greater than 0 if and only if there is in $G$ a vertex cover of size at most $k$.

On the other side, it directly follows that the greedy algorithm proposed in Theorem~\ref{thm:approx} works, with the same approximation factor, even with the linear threshold diffusion model. Indeed, it is known that the influence maximization is a monotonic and submodular function even with this dynamics \cite{kempe2003maximizing}. However, it can be observed that this is sufficient to make the proof of Theorem~\ref{thm:approx} hold.

\subsection{Seeds with Different Costs}
In our model we assume that each node can be selected as a seed at same cost. This can be highly unrealistic. Hence, an extension to our model would be to assume that each node $u$ has a different cost $w(u)$ that should be payed for each message initiated by that node.

Intuitively, this extension makes the election control problem harder.
Hence, inapproximability results clearly extend to this setting too.
Interestingly, however, we have that, whenever there are messages such that the last ranked candidate can be driven to the first position, a poly-time algorithm returning a constant approximation to the electoral control problem exists even if nodes have heterogeneous seeding costs.
Indeed, it is known that in this setting there is a poly-time algorithm for influence maximization returning a $\left(1 - \frac{1}{\sqrt{e}}\right)$-approximation of the optimal seed set \cite{nguyen2013budgeted}.
Then, the arguments of the proof of Theorem~\ref{thm:approx} immediately prove that this algorithm
provides a $\rho$-approximation for the extension of the election control problem to voters with different costs, where
 $\rho = \frac{B - \tau + 1}{2\ell B}\left(1-\frac{1}{\sqrt{e}}\right)$.

\section{Conclusions and Future Work}
In this work, we analyzed the problem of manipulating the result of an election by seeding the network with messages both positive and negative towards the candidates. We prove a tight characterization of the settings in which computing an approximation to the best manipulation can be infeasible or feasible. Specifically, we show that a transition phase exists when we pass from a setting where the last ranked candidate for a voter can never be promoted to the most-preferred one to the settings in which this promotion is allowed, and this holds regardless the ranking revision function one can use.
We both applied these results to special cases and showed how to generalize them to hold even for variants and/or generalizations of the original model. We also show that, in simple networks, a large class of algorithms, that mainly include all approaches recently adopted for social-influence problems, fail to compute an empirically bounded approximation even on very simple networks, as undirected graphs with every node having a degree at most two or directed trees.

Nevertheless, we believe that our results can be  refined to have a more detailed picture of the problem.
For example, our inapproximability results are achieved by assuming that the underlying network is directed.
It would then be interesting to understand whether these results extend to undirected graphs or the latter embeds features that can be exploited by a manipulator even if the last ranked candidate cannot be promoted to the first position. We note that an adaption the proof of Theorem~\ref{thm:inapprox} along the line of the reduction for influence maximization described by \citeauthor{khanna2014influence}~[\citeyear{khanna2014influence}] can be used to prove partial results in this direction, namely that inapproximability holds even for undirected graphs whenever we have both heterogeneous seeding costs and the weight of each vote changing from voter to voter.
More in general, a finer characterization of the network structure for which manipulation is hard/easy would be of extreme interest.

While we provided a poly-time constant-approximation algorithm in many settings, we did not try to optimize the approximation ratio. Hence, it would be interesting to design algorithms that can  improve on ours. Finally, it would be interesting to analyze other generalizations of our model, e.g., different models for information diffusion and time-evolving networks.

\bibliographystyle{named}
\bibliography{elections}


\end{document}